\newtheorem{Lemma}{Lemma}
\newcommand{\ds}{\displaystyle}
\newcommand{\ts}{\textstyle}
\newcommand{\jump}[1]{\left\llbracket #1 \right\rrbracket}
\newcommand{\dmean}[1]{\llangle #1 \rrangle}
\newcommand{\llangle}{{\langle\!\langle}}
\newcommand{\rrangle}{{\rangle\!\rangle}}
\newcommand{\tmmathbf}[1]{\ensuremath{\boldsymbol{#1}}}
\newcommand{\tmop}[1]{\ensuremath{\operatorname{#1}}}
\definecolor{rouge}{rgb}{1,0,0}
\definecolor{bleu}{rgb}{0,0,1}
\definecolor{vert}{rgb}{0,0.5,0}
 \newcommand{\Na}{\mathfrak{a}}
 \newcommand{\Nb}{\mathfrak{b}}
 \newcommand{\Nh}{\mathfrak{h}}
 \newcommand{\Ng}{\mathfrak{g}}
\title{Scattering of transient waves by an interface with time-modulated jump conditions}
\author{\firstname{Micha\"el} \lastname{Darche}
\CDRorcid{0000-0001-5731-4551}}
\address{Aix Marseille Univ, CNRS, Centrale Marseille, LMA UMR 7031, Marseille, France}
\email[M. Darche]{darche@lma.cnrs-mrs.fr}
\author{\firstname{Rapha\"el} \lastname{Assier}}
\address{Department of Mathematics, University of Manchester, Oxford Road, Manchester M13 9PL, UK}
\email[R. Assier]{raphael.assier@manchester.ac.uk}
\author{\firstname{S\'ebastien} \lastname{Guenneau}}
\address{UMI 2004 Abraham de Moivre-CNRS, Imperial College London, London SW7 2AZ, UK}
\address{Department of Physics, The Blackett Laboratory, Imperial College London, London SW7 2AZ, UK}
\email[S. Guenneau]{s.guenneau@imperial.ac.uk}
\author{\firstname{Bruno} \lastname{Lombard}\IsCorresp}
\email[B. Lombard]{lombard@lma.cnrs-mrs.fr}
\author{\firstname{Marie} \lastname{Touboul}}
\address{Department of Mathematics, Imperial College London, Huxley Building, Queen's Gate, London SW7 2AZ, UK}
\address{POEMS, ENSTA, CNRS, INRIA, Institut Polytechnique de Paris, 91120, Palaiseau, France}
\email[M. Touboul]{marie.touboul@ensta.fr}
\thanks{The first author is funded as a post-doctoral researcher by the Institut M\'ecanique et Ing\'enierie (Marseille, France). S. G. is funded by UK Research and Innvovation (UKRI) under the UK government's Horizon Europe funding guarantee (grant number 10033143).} 
\keywords{Elastic waves, Imperfect jump conditions, Time-varying media, Non-reciprocity, Numerical methods for hyperbolic equations}
\begin{abstract} 
Time modulation of the physical parameters offers interesting new possibilities for wave control. Examples include amplification of waves, harmonic generation and non-reciprocity, without resorting to non-linear mechanisms. Most of the recent studies focus on the time-modulation of the bulk physical properties. However, as the temporal modulation of these properties is difficult to achieve experimentally, we will concentrate here on the special case of an interface with time-varying jump conditions, which is simpler to implement. This work is focused on wave propagation in a one-dimensional medium containing one modulated interface. Properties of the scattered waves are investigated theoretically: energy balance, generation of harmonics, impedance matching and non-reciprocity. A fourth-order numerical method is also developed to simulate  transient scattering. Numerical experiments are conducted to validate the numerical scheme and to illustrate the theoretical findings.
\end{abstract}
\begin{document}

\maketitle

\section{Introduction}\label{SecIntro}

Since the early 2000s, periodic modulation of physical properties in space has led to significant developments in wave control. Based on homogenization theory and Floquet-Bloch analysis, these so-called {\it metamaterials} have made it possible to achieve cloaking, negative refraction or perfect lensing, to cite a few of the exotic effects. We refer the interested reader to \cite{craster2013} for an overview of this abundant subject in the case of acoustics.

In recent years, the technical possibility of modulating physical properties in time has opened new perspectives in the field of metamaterials \cite{caloz2019spacetime1,caloz2019spacetime2,galiffiA2022}. Time-varying metamaterials exhibit unusual phenomena such as time reflection and time refraction\cite{wang_temporal_2025}, nontrivial topology \cite{Swinteck2015,Lustig2018}, frequency conversion \cite{Salehi2022,yi_frequency_2017} unidirectional and parametric amplification \cite{torrentPRB2018,kimPRE2023,kiorpelidisPRB2024}. Numerous works have emerged since the late fifties, investigating this physics both theoretically \cite{cullen1958travelling,morgenthaler1958velocity,tien1958parametric,Cassedy1963,Cassedy1967,fante1971transmission,chu1972wave,weekes2001numerical,maestre2007spatio,lurie2007introduction,maestre2008dynamic,jensen2009space,to2009homogenization,jensen2010optimization,sanguinet2011homogenized,fang2012realizing,yuan2016photonic,milton2017field,movchanPTRSMPES2022,ammariJoMP2023,Ammari2022,Nassar2017,Huidobro2019,farhat2021spacetime,Huidobro2021,Pham2022,li2023space,Touboul2024} and experimentally \cite{ashkin1958parametric,couder2005walking,lira2012electrically,taravati2016mixer,fink2016loschmidt,mallejac2023scattering,tessierbrothelandeAPL2023,Tirole2022,Moussa2023,Lustig2023,Harwood2024,Goldsberry2025}. In particular, systems that are periodically modulated in both space and time have the ability to alleviate some of the constraints of static media, such as the breaking of reciprocity \cite{Goldsberry2022}.  As a result, exciting wave control possibilities have been identified, such as unidirectional amplification \cite{wenCP2022}, coherent perfect absorption \cite{Galiffi2024} and non-reciprocity \cite{Croenne2019,Nassar2020,tessierbrothelandeAPL2023,yi_reflection_2018}. 

Nevertheless, a current limitation relies on the difficulty to achieve time modulation of bulk parameters experimentally (e.g.\ density or Young's modulus of elasticity). It is easier to modify properties at discrete points, for example, by modifying the stiffness of a membrane or a surface impedance \cite{zhuPRB2020,zhuAPL2020}, or by vertical oscillations of a submerged plate within a water tank \cite{KOUKOURAKI2025103530}. 
Recent work has focused on the theoretical analysis of wave propagation in systems with time modulation at discrete points using multiple scattering \cite{puJoSaV2024}, transmission-line theory \cite{mallejacPRA2023}, and capacitance matrix \cite{ammari_scattering_2024,ammari_spacetime_2025}.

The aim of this paper is then to study theoretically and numerically the propagation of waves across a time-modulated interface. For this purpose, the article is structured as follows.
Section \ref{SecContinu} describes the time-dependent jump conditions, which generalize the static jump conditions widely used to describe imperfect contacts \cite{assierPRSMPES2020}. Interface dissipation is also incorporated to be closer to experimental devices. An energy balance is conducted. The generation of harmonics is studied through a harmonic balance analysis. The particular case of reflectionless modulated interface is discussed. Section \ref{SecNum} describes the time-domain numerical methods used to simulate transient wave phenomena. Integration of the momentum equation and of the constitutive law is done by a fourth-order finite-difference ADER scheme. The  time-varying jump conditions are discretized by the Explicit Simplified Interface Method (ESIM) \cite{lombardSJSC2003,lombardSJSC2006}, requiring new developments of this method. Validation of the numerical results is done by comparisons with a semi-analytical solution. Section \ref{SecSimus} presents different numerical experiments, illustrating the theoretical findings: amplification of waves, generation of harmonics, impedance matching, and non-reciprocity. Lastly, Section \ref{SecConclu} draws future lines of research such as homogenization of a network of such modulated interfaces.


\section{Physical modeling}\label{SecContinu}

\subsection{Problem statement}\label{SecContinutPb}

Let us consider a one-dimensional linear elastic medium with density $\rho(x)$ and Young's modulus $E(x)=\rho(x)\,c^2(x)$, where $c$ is the sound speed (the analysis is however relevant to other wave physics problems such as acoustics or s-polarization and p-polarization in electromagnetism, see Remark 1). These parameters may be discontinuous at $x=x_0$. The conservation of momentum and the Hooke's law are formulated as follows:
\begin{subnumcases}{\label{EDP-Hooke}}
\ds \partial_t(\rho(x)\,\partial_{t}u(x,t)) =\partial_x\sigma(x,t)+F(x,t),\label{EDP-Hooke-a}\\ [8pt]
\ds \sigma(x,t) = E(x)\,\partial_x u(x,t),\label{EDP-Hooke-b}
\end{subnumcases}
with $u$ being the displacement field, $\sigma$ the stress field, and $F$ a body force with compact support that does not contain $x_0$. The velocity field is denoted by $v=\partial_t u$. For any function $g(x)$, we define the jump and mean operators $\jump{.}_{x_0}$ and $\dmean{.}_{x_0}$ as 
\begin{equation}
\jump{g}_{x_0}=g^+(x_0)-g^-(x_0),\qquad \dmean{g}_{x_0}=\frac{1}{2}\left(g^+(x_0)+g^-(x_0)\right),
\label{jump_mean}
\end{equation}
where 
\begin{equation}
g^\pm(x_0)=\lim_{\eta \rightarrow 0^+}g(x_0\pm \eta).
\label{Gpm}
\end{equation}
To simplify the notations, the indices and arguments denoting space and time will be omitted when no risk of ambiguity occurs; in particular, $x_0$ is omitted in the jump and mean operators from now on. 
As can be seen from direct computations, for any two functions $g$ and $h$, these operators satisfy the property
\begin{equation}
\jump{g\,h}=\jump{g}\,\dmean{h}+\dmean{g}\,\jump{h}.
\label{PropJumpMean}
\end{equation}
Introducing interface parameters of stiffness $\mathscr{K}(t)>0$, compliance  $\mathscr{C}(t)=1/\mathscr{K}(t)$, inertia $\mathscr{M}(t)\geq 0$ and dissipation $\mathscr{Q}_{C,M}(t)\geq 0$, the imperfect interface is modelled by the time-dependent jump conditions:
\begin{subnumcases}{\label{JC1interf}}
\ds \jump{v(\cdot,t)}=\partial_t\left(\mathscr{C}(t)\,\dmean{\sigma(\cdot,t)}\right)+\mathscr{Q}_C(t)\,\dmean{\sigma(\cdot,t)},\label{JC1interf-K}\\ [8pt]
\ds \jump{\sigma(\cdot,t)}=\partial_t\left(\mathscr{M}(t)\,\dmean{v(\cdot,t)}\right)+\mathscr{Q}_M(t)\,\dmean{v(\cdot,t)}.\label{JC1interf-M}
\end{subnumcases} 
Five remarks arise from \eqref{JC1interf}:
\begin{itemize}
\item The modulated jump conditions are assumed to result from an external stimulus. Such a modulation of $\jump{v}$ and $\jump{\sigma}$ can be obtained by piezoelectric components \cite{tessierbrothelandeAPL2023} and time-modulated membranes \cite{zhuAPL2020}, respectively; 
\item As seen further in Section \ref{SecContinuNRJ}, the compliance $\mathscr{C}(t)$ and the inertia $\mathscr{M}(t)$ must be {\it inside} the time derivative to ensure a sound energy balance;
\item  The terms in \eqref{JC1interf-K} are reminiscent of the Maxwell model of viscoelasticity in the case of a static dissipation parameter $\mathscr{Q}_C(t)=\mathscr{Q}_C$; 
\item When $\mathscr{C}=0$, $\mathscr{M}=0$, $\mathscr{Q}_C=0$ and $\mathscr{Q}_M=0$, the conditions of perfect bonding are recovered; 
\item If $\mathscr{Q}_C=0$, then \eqref{JC1interf-K} can be recast in the more usual form with displacement:
$$
\ds \jump{u(\cdot,t)}=\mathscr{C}(t)\,\dmean{\sigma(\cdot,t)};$$
\end{itemize}

\begin{Remark}
The problem studied in this article is generic to various wave physics problems. For instance, the case of acoustics is obtained by changing $u$ into the acoustic pressure, $E$ into the inverse of mass density and $\rho$ into the compressibility. Similarly, the case of s-polarization (resp. p-polarization) in electromagnetism can be obtained by changing $u$ into the transverse electric (resp. magnetic) field, $\rho$ into the permittivity (resp. permeability) and $E$ into the inverse of the permeability (resp. inverse of the permittivity).
\end{Remark}


\subsection{Energy balance}\label{SecContinuNRJ}

We take the product of the momentum equation \eqref{EDP-Hooke-a} by $v=\partial_t u$ and integrate over $(-\infty,x_0)$. Integrating by parts and using \eqref{EDP-Hooke-b} give
\begin{equation}
\begin{array}{lll}
\ds \int_{-\infty}^{x_0}\rho\,v\,\partial_t v\,dx &=& \ds \int_{-\infty}^{x_0}v\,\partial_x\sigma\,dx+\int_{-\infty}^{x_0}F\,v\,dx,\\ [10pt]
&=& \ds -\int_{-\infty}^{x_0}\sigma\,\partial_xv\,dx+[v\,\sigma]_{-\infty}^{x_0}+\int_{-\infty}^{x_0}F\,v\,dx,\\ [10pt]
&=& \ds -\int_{-\infty}^{x_0}\frac{1}{E}\,\sigma\,\partial_t\sigma\,dx+v^-(x_0,t)\,\sigma^-(x_0,t)+\int_{-\infty}^{x_0}F\,v\,dx.
\end{array}
\label{IPP1}
\end{equation}
Similarly, integrating over $(x_0,+\infty)$ yields
\begin{equation}
\int_{x_0}^{+\infty}\rho\,v\,\partial_t v\,dx= -\int_{x_0}^{+\infty}\frac{1}{E}\,\sigma\,\partial_t\sigma\,dx-v^+(x_0,t)\,\sigma^+(x_0,t)+\int^{+\infty}_{x_0}F\,v\,dx,
\label{IPP2}
\end{equation}
where we have used the fact that $F$ is compactly supported and that the fields vanish at infinity. Summing \eqref{IPP1} and \eqref{IPP2} leads to
\begin{equation}
\frac{d}{dt} \left(\frac{1}{2}\int_{\mathbb{R}}\left(\rho\,v^2+\frac{1}{E}\,\sigma^2\right)\,dx\right)=-\jump{v\,\sigma}+\int_{\mathbb{R}}F\,v\,dx.
\label{IPP3}
\end{equation}
The property \eqref{PropJumpMean} and the jump conditions \eqref{JC1interf} lead to
\begin{equation}
\begin{array}{lll}
\ds \jump{v\,\sigma} &=& \ds\jump{v}\dmean{\sigma}+\dmean{v}\jump{\sigma}, \\ [8pt]
&=& \ds \left(\mathscr{C}'(t)\,\dmean{\sigma}+\mathscr{C}(t)\,\dmean{\partial_t \sigma}+\mathscr{Q}_C(t)\,\dmean{\sigma}\right)\dmean{\sigma}+\left(\mathscr{M}'(t)\,\dmean{v}+\mathscr{M}(t)\,\dmean{\partial_t v}+\mathscr{Q}_M(t)\,\dmean{v}\right)\dmean{v},\\ [8pt]
&=& \ds \frac{d}{dt} \left(\frac{1}{2}\left(\mathscr{C}(t)\,\dmean{\sigma}^2 + \mathscr{M}(t)\,\dmean{v}^2\right)\right)+\frac{1}{2}\left(\mathscr{C}'(t)\dmean{\sigma}^2+\mathscr{M}'(t)\dmean{v}^2\right)+\mathscr{Q}_C(t)\,\dmean{\sigma}^2+\mathscr{Q}_M(t)\,\dmean{v}^2.
\end{array}
\end{equation}
It leads to the following energy balance.

\begin{Proposition}
Let ${\mathcal E}_m={\mathcal E}_b+{\mathcal E}_i$ be the total energy, with the bulk energy
\begin{equation}
{\mathcal E}_b(t)=\frac{1}{2}\int_{\mathbb{R}}\left(\rho\,v^2+\frac{1}{E}\,\sigma^2\right)\,dx
\label{NRJv}
\end{equation}
and the interface energy
\begin{equation}
{\mathcal E}_i(t)=\frac{1}{2}\mathscr{M}\,\dmean{v}^2+\frac{1}{2}\mathscr{C}\,\dmean{\sigma}^2.
\label{NRJi}
\end{equation}
Introducing the power of external forces due to body forcing
\begin{equation}
{\mathcal P}(t)=\int_{\mathbb{R}}F\,v\,dx,
\end{equation}
one has
\begin{equation}
\frac{d}{dt}{\mathcal E}_m(t)={\mathcal P}(t)-\frac{1}{2}\left( \mathscr{M}'(t)\,\dmean{v}^2+ \mathscr{C}'(t)\,\dmean{\sigma}^2\right)-\left(\mathscr{Q}_C(t)\,\dmean{\sigma}^2+\mathscr{Q}_M(t)\,\dmean{v}^2\right).
\label{dEdT}
\end{equation}
\label{PropNRJ}
\end{Proposition}

\noindent
Five remarks arise from Proposition \ref{PropNRJ} and positivity of $\mathscr{M}$, $\mathscr{C}$, $\mathscr{Q}_C$ and $\mathscr{Q}_M$:
\begin{itemize}
\item both ${\mathcal E}_b$ and ${\mathcal E}_i$ are positive, hence ${\mathcal E}_m \geq 0$;
\item in ${\mathcal E}_i$ one recognizes the kinetic energy and the potential energy of a spring-mass system when static parameters are considered;
\item if there are no body force ($F=0$), no time-modulation (both $\mathscr{M}$ and $\mathscr{C}$ are constant in time) and no dissipation term in the jump conditions (${\mathscr{Q}_C}$=0 and ${\mathscr{Q}_M}$=0), then the total mechanical energy is conserved;
\item if no body force is applied ($F=0$) and both $\mathscr{M}$ and $\mathscr{C}$ are constant in time, but ${\mathscr{Q}_C}$ or ${\mathscr{Q}_M}$ is non-zero, then the total mechanical energy is dissipated;
\item if $F=0$ but either $\mathscr{M}$ or $\mathscr{C}$ varies with time, then the total mechanical energy may vary, due to the power of external forces required to modify $\mathscr{M}$ and $\mathscr{C}$. The sign of the right-hand-side in \eqref{dEdT} is arbitrary, so that ${\mathcal E}_m$ may increase or decrease.
\end{itemize}

\subsection{Some additional properties}
The latter remark about the amplification of energy raises the question of possible parametric amplification. Under suitable assumptions, the next Proposition states that such an amplification is not possible, and that the solution remains bounded. The proof is given in Appendix \ref{AppAmpli}.

\begin{Proposition}
Let us assume:
\begin{description}
    \item[(i)] a sinusoidal modulation of the interface compliance $\mathscr{C}(t)$ and the dissipation term $\mathscr{Q}_C(t)$;
    \item[(ii)] no jump of stress (ie $\mathscr{M}(t)=0$ and $\mathscr{Q}_M(t)=0$);
    \item[(iii)] a body force in \eqref{EDP-Hooke-a} in the form $F(x,t)=\delta(x-x_s)\,S(t)$, where $S$ is bounded.
\end{description}
Then the scattered fields $v$ and $\sigma$ remain bounded for all $t$.
\label{PropAmpli}
\end{Proposition}

This result is counter-intuitive, since modulated systems often exhibit resonance (e.g.\ a modulated spring-mass system). An interpretation is that the energy may be evacuated on both sides of the domain, preventing from an unbounded increase. A similar conclusion holds by interchanging the roles of $\mathscr{C}$ and $\mathscr{M}$. The case where both interface compliance and interface inertia are modulated remains open but this limitation seems, a priori, purely technical. Indeed, our numerical experiments suggest that the same result will hold in this case.

\begin{Proposition}
When only $\mathscr{Q}_C(t)\neq 0$, \textit{i.e} $\mathscr{M}(t)=0$, $\mathscr{C}(t)=0$ and $\mathscr{Q}_M(t)=0$, the limits between which the solutions evolve are determined analytically (Appendix \ref{AppAnalytic}) and correspond to the static cases with the extreme values of $\mathscr{Q}_C(t)$. The limits for the case with only $\mathscr{Q}_M(t)\neq0$ can be obtained using the same reasoning.
\label{PropEnvelop}
\end{Proposition}


\subsection{Generation of harmonics}\label{SecContinuHarm}

The interaction of a single-frequency wave with a modulated interface generates an infinite number of harmonics. Here we shortly describe how to compute them. For simplicity, constant physical parameters are considered around the interface at $x_0=0$. In this part, a sinusoidal modulation for $\mathscr{C}$,   $\mathscr{M}$, $\mathscr{Q}_C$ and $\mathscr{Q}_M$ is chosen: 
\begin{equation}
\begin{array}{l}
\ds \mathscr{C}(t)=\mathscr{C}_0\left(1+\varepsilon_C\,\sin(\Omega t)\right),\\ [8pt]
\ds \mathscr{M}(t)=\mathscr{M}_0\left(1+\varepsilon_M\,\sin(\Omega t)\right),\\ [8pt]
\ds \mathscr{Q}_C(t)=\mathscr{Q}_{C_0}\left(1+\varepsilon_{Q_C}\,\sin(\Omega t)\right),\\ [8pt]
\ds \mathscr{Q}_M(t)=\mathscr{Q}_{M_0}\left(1+\varepsilon_{Q_M}\,\sin(\Omega t)\right),
\end{array}
\label{ModulSinus}
\end{equation}
with $\Omega=2\,\pi\,f_m$, where $f_m$ is the modulation frequency, and $\mathscr{C}_0\geq0$, $\mathscr{M}_0\geq0$ and $-1<\varepsilon_X<1$, with $X=C,M,Q_C,Q_M$.
 A harmonic incident wave $v^{\tmop{in}}$ and $\sigma^{\tmop{in}}$ of the form 
\begin{eqnarray*}
  v^{\tmop{in}} (x, t) & = & e^{i \omega (t - x / c)};\qquad \sigma^{\tmop{in}} (x, t)  =  -\rho c \,e^{i \omega (t - x / c)}
\end{eqnarray*}
impacts the interface from the left. Thus, we can write
\begin{eqnarray*}
  v^{\tmop{tot}} (x, t) & = & \left\{ \begin{array}{ccc}
    v^{\tmop{in}} (x, t) + v^{\tmop{ref}} (x, t) & \tmop{for} & x < 0\\
    v^{\tmop{trans}} (x, t) & \tmop{for} & x > 0,
  \end{array} \right. 
\end{eqnarray*}
and \begin{eqnarray*}
  \sigma^{\tmop{tot}} (x, t) & = & \left\{ \begin{array}{ccc}
    \sigma^{\tmop{in}} (x, t) + \sigma^{\tmop{ref}} (x, t) & \tmop{for} & x < 0\\
    \sigma^{\tmop{trans}} (x, t) & \tmop{for} & x > 0,
  \end{array} \right. 
\end{eqnarray*}
for some reflected and transmitted wave fields $v^{\tmop{ref}}$, $\sigma^{\tmop{ref}}$, $v^{\tmop{trans}}$ and $\sigma^{\tmop{trans}}$ that also satisfy the governing wave equation. Floquet theorem states that these fields write
\begin{eqnarray}{\label{AnsatzRT}}
  v^{\tmop{ref}} (x, t) = \sum_{k \in \mathbb{Z}} R_k e^{ i \omega_k (t +
  x / c)}, &\qquad &\sigma^{\tmop{ref}} (x, t)  = \rho c\sum_{k \in \mathbb{Z}} R_k  e^{ i \omega_k (t +
  x / c)}, \label{eq:ansatz-ref}\\
  v^{\tmop{trans}} (x, t)  =  \sum_{k \in \mathbb{Z}} T_k e^{ i \omega_k (t
  - x / c)}, &\qquad &\sigma^{\tmop{trans}} (x, t)  = - \rho c \sum_{k \in \mathbb{Z}} T_k e^{ i \omega_k (t
  - x / c)},  \label{eq:ansatz-trans}
\end{eqnarray}
for some reflection and transmission coefficients $R_k$ and $T_k$ to be determined, and where we denote $\omega_k \equiv  \omega + k \Omega$. One introduces the auxiliary quantities $\Psi_k = T_k - R_k$ and $\Phi_k  =  T_k + R_k$.



\subsubsection{Dealing with the first jump equation}

\noindent 
Using \eqref{eq:ansatz-ref} and \eqref{eq:ansatz-trans}, together with the definition of the jump operator, we find that
\begin{eqnarray*}
  \llbracket v^{\tmop{tot}} \rrbracket  =  e^{ i \omega t} \left( - 1 +
  \sum_{k \in \mathbb{Z}} \Psi_k e^{ik \Omega t} \right) & \text{ and } &   \llangle \sigma^{\text{tot}} \rrangle = - \frac{\rho\,c}{2} e^{ i \omega t} \left(1+\sum_{k \in \mathbb{Z}} \Psi_k e^{ik \Omega t} \right).
\end{eqnarray*}
Similarly, though it is slightly longer, we find that
\begin{eqnarray*}
  \partial_t \llangle \mathscr{C}(t)\, \sigma^{\tmop{tot}} \rrangle  &= & e^{i
  \omega t} \left( \frac{\mathcal{C}  \varepsilon_C}{4}\omega_{-1} e^{- i \Omega t} -
  \frac{i\mathcal{C} }{2}\omega - \frac{\mathcal{C}  \varepsilon_C}{4} \omega_1e^{+ i
  \Omega t}\right) \\
  & + & e^{i \omega t}\sum_{k \in \mathbb{Z}} \left( - \frac{\mathcal{C}
  \varepsilon_C}{4} \omega_{k} \Psi_{k - 1} - \frac{i\mathcal{C}}{2} 
 \omega_{k}  \Psi_k + \frac{\mathcal{C} \varepsilon_C}{4} \omega_{k}  \Psi_{k + 1} \right)
  e^{ ik \Omega t},
\end{eqnarray*}
where we have introduced the reduced quantity $\mathcal{C} = \rho\,c\,\mathscr{C}_0$. Therefore the first jump condition \eqref{JC1interf-K}, after dividing through by $e^{i \omega t}$ and rearranging slightly becomes
\begin{equation}
\begin{array}{l}
\ds \sum_{k \in \mathbb{Z}} \left(\frac{\mathcal{C} \varepsilon_C\omega_{k}-i\mathcal{Q}_C\varepsilon_{Q_C}}{4}  \Psi_{k - 1} + \left( 1 +\frac{\mathcal{Q}_C}{2}+ i\frac{\mathcal{C}}{2} \omega_k  \right) \Psi_k -\frac{\mathcal{C} \varepsilon_C\omega_{k}-i\mathcal{Q}_C\varepsilon_{Q_C}}{4}  \Psi_{k +1}  \right) e^{ik\Omega t} \\ [12pt]
\ds \hspace{0.7cm} =  \frac{\mathcal{C}  \varepsilon_C\omega_{-1}-i\mathcal{Q}_C\varepsilon_{Q_C}}{4}\,e^{- i \Omega t} + \left(1-\frac{\mathcal{Q}_C}{2} - i\frac{\mathcal{C} \omega}{2} \right) - \frac{\mathcal{C}  \varepsilon_C\omega_1-i\mathcal{Q}_C\varepsilon_{Q_C}}{4}\,e^{+ i \Omega t},
\end{array}
\end{equation}
where $\mathcal{Q}_C = \rho\,c\,\mathscr{Q}_{C_0}$.
Since the $e^{ ik \Omega t}$ functions are linearly independent, we obtain this system in matrix form:
\begin{equation}
  \mathbb{A} (\mathcal{C}, \varepsilon_C, \mathcal{Q}_C, \varepsilon_{Q_C}, \omega, \Omega) \,\tmmathbf{\Psi} = 
  {\bf V} (\mathcal{C}, \varepsilon_C, \mathcal{Q}_C, \varepsilon_{Q_C}, \omega, \Omega),
\label{SysPsi}
\end{equation}
where 
\begin{equation}
\begin{array}{l}
\ds \tmmathbf{\Psi} \equiv \left(\hdots\, \Psi_{-N},\, \hdots\, \Psi_{-1},\, \Psi_0,\, \Psi_{ 1},\, \hdots\, \Psi_{ N},\, \hdots
  \right)^\top,\\ [8pt]
\ds {\bf V} (\mathcal{C}, \varepsilon_C, \mathcal{Q}_C, \varepsilon_{\mathscr{Q}_C},
  \omega,\Omega) = \left( \hdots\, 0\, \hdots\, \frac{\mathcal{C} \varepsilon_C}{4}\omega_{-1}-i\frac{\mathcal{Q}_C\varepsilon_{Q_C}}{4} ,\, 1-\frac{\mathcal{Q}_C}{2} - i\frac{\mathcal{C} }{2}\omega,\, -\frac{\mathcal{C}  \varepsilon_C}{4}\omega_1+i\frac{\mathcal{Q}_C\varepsilon_{Q_C}}{4},\, \hdots\, 0\, \hdots \right)^\top,
  \end{array}
\end{equation}
and $\mathbb{A}$ is a tri-banded antidiagonal matrix. Now we can truncate this system so that $\mathbb{A}$ becomes a
$(2 N + 1) \times (2 N + 1)$ matrix, and the truncated vector
$\tmmathbf{\Psi}$ can be recovered by simple inversion.



\subsubsection{Dealing with the second jump equation}

\noindent
One obtains directly 
\begin{eqnarray*}
   \llbracket \sigma^{\tmop{tot}} \rrbracket  =   \rho c\,e^{i \omega t} \left(1 -  \sum_{k \in \mathbb{Z}} \Phi_k e^{
  ik \Omega t} \right) & \text{ and } & \llangle v^{\text{tot}} \rrangle=\frac{1}{2} e^{i\omega t} \left( 1+\sum_{k\in\mathbb{Z}} \Phi_k e^{
  ik \Omega t} \right).
\end{eqnarray*}
Similarly, we find that
\begin{eqnarray*}
  \partial_t  \llangle \mathscr{M} (t) {v^{\tmop{tot}}} \rrangle & = & e^{i
  \omega t} \left(- \frac{\mathscr{M}_0 \,\varepsilon_M}{4}\omega_{-1} e^{- i \Omega t} +
  \frac{i\mathscr{M}_0 }{2}\omega + \frac{\mathscr{M}_0 \, \varepsilon_M}{4}\omega_1 e^{+ i
  \Omega t}\right) \\
  & + & e^{i \omega t}\sum_{k \in \mathbb{Z}} \left(  \frac{\mathscr{M}_0\,
  \varepsilon_M}{4} \omega_{k} \Phi_{k - 1} + \frac{i\mathscr{M}_0}{2} 
 \omega_{k}  \Phi_k - \frac{\mathscr{M}_0\,\varepsilon_M}{4} \omega_{k}  \Phi_{k + 1} \right)
  e^{ ik \Omega t}.
\end{eqnarray*}
Upon introducing $\mathcal{M} = \mathscr{M}_0/(\rho c)$ and $\mathcal{Q}_M=\mathscr{Q}_{M_0}/(\rho c)$, the second jump condition (\ref{JC1interf-M}) leads to
\begin{equation}
\begin{array}{l}
\ds \sum_{k \in \mathbb{Z}} \left(\frac{\mathcal{M}\,\varepsilon_M\omega_{k}-i\mathcal{Q}_M\varepsilon_{Q_M}}{4}  \Phi_{k - 1} + \left( 1 +\frac{\mathcal{Q}_M}{2}+ i\frac{\mathcal{M}}{2} \omega_k \right) \Phi_k -\frac{\mathcal{M}\,\varepsilon_M\omega_{k}-i\mathcal{Q}_M\varepsilon_{Q_M}}{4}  \Phi_{k +1}  \right) e^{ik\Omega t} \\ [12pt]
\ds \hspace{0.7cm} =  \frac{\mathcal{M}\, \varepsilon_M\omega_{-1}-i\mathcal{Q}_M\varepsilon_{Q_M}}{4}\,e^{- i \Omega t} + \left(1 -\frac{\mathcal{Q}_M}{2} - i\frac{\mathcal{M} \omega}{2}\right) - \frac{\mathcal{M}\,\varepsilon_M\omega_1-i\mathcal{Q}_M\varepsilon_{Q_M}}{4}\,e^{+ i \Omega t}.
\end{array}
\end{equation}
We end up with a system of the form
\begin{equation}
\mathbb{B} (\mathcal{M}, \varepsilon_{M},\mathcal{Q}_M, \varepsilon_{Q_M}, \omega, \Omega) \,\tmmathbf{\Phi} = {\bf W} (\mathcal{M}, \varepsilon_M,\mathcal{Q}_M, \varepsilon_{Q_M},\omega,\Omega).
\label{SysPhi}
\end{equation}
The matrix $\mathbb{B}$ and the vector ${\bf W}$ are obtained  by replacing $(\mathcal{C}, \varepsilon_C, \mathcal{Q}_C, \varepsilon_{Q_C}, \omega, \Omega)$ by $(\mathcal{M}, \varepsilon_M , \mathcal{Q}_M, \varepsilon_{Q_M},\omega,\Omega)$ in $\mathbb{A}$ and ${\bf V}$. The vector $\tmmathbf{\Phi}$ is
$$
(\cdots, \Phi_{-N}, \cdots, \Phi_{-1}, \Phi_0, \Phi_{ 1}, \cdots, \Phi_{ N},\cdots)^\top.
$$
As before, this system can be solved for any truncation by simple inversion.



\subsubsection{Final reflection and transmission coefficients}

\noindent
Once $\Psi_k$ and $\Phi_k$ are computed, the reflection and transmission coefficients are given by:
\begin{equation}
 R_k  =  \frac{1}{2} (\Phi_k - \Psi_k),\qquad T_k  =  \frac{1}{2} (\Psi_k + \Phi_k).
\label{RkTk}
\end{equation}
Two limit-cases are highlighted:
\begin{itemize}
    \item Without modulation, the only non-zero coefficients are $R_0$ and $T_0$. These coefficients are given in Appendix \ref{AppRT};
    \item If $\mathcal{M}=\mathcal{C}$, $\varepsilon_M = \varepsilon_C$, $\mathcal{Q}_C=\mathcal{Q}_M$ and $\varepsilon_{Q_C} = \varepsilon_{Q_M}$, then the two matrix systems \eqref{SysPsi} and \eqref{SysPhi} are exactly the same, and so $\tmmathbf{\Phi}=\tmmathbf{\Psi}$, which leads to $R_k = 0$ for all $k$. This property of {\it impedance matching} is summarized in the following proposition.
\end{itemize}

\begin{Proposition}
Let us consider a homogeneous medium of impedance $Z=\rho\,c$. Under the following equalities
\begin{equation}
\mathscr{M}_0=Z^2\,\mathscr{C}_0, \qquad \mathscr{Q}_{M_0}=Z^2\,\mathscr{Q}_{C_0}, \qquad
\varepsilon_C=\varepsilon_M, \qquad
\varepsilon_{\mathcal{Q}_C} = \varepsilon_{\mathcal{Q}_M},
\label{Z_KMT}
\end{equation}
then no reflected wave is generated at the modulated interface.
\label{PropImpedance}
\end{Proposition}

\noindent
Proposition \ref{PropImpedance} generalizes the impedance matching condition already obtained with static jump conditions and no dissipation in \cite{assierPRSMPES2020}. 

\begin{Remark}
Proposition \ref{PropImpedance} can be extended to any periodic modulation, while respecting $\mathscr{M}(t)=Z^2\mathscr{C}(t)$ and $\mathscr{Q}_M(t)=Z^2\mathscr{Q}_C(t)$, using the linearity of the equations and the fact that periodic functions can be decomposed into Fourier series.
\label{RqueImpedance}
\end{Remark}


\section{Numerical modeling}\label{SecNum}

\subsection{Numerical scheme}\label{SecNumScheme}

\paragraph{\bf First-order formulation}A velocity-stress formulation of the evolution equations is used. We introduce the compact notations
\begin{equation}
\begin{array}{lll}
{\bf U}(x,t)=\left( 
\begin{array}{c}
v(x,t)\\
\sigma(x,t)
\end{array} 
\right), \hspace{1cm} &
{\bf F}(x,t)=\left( 
\begin{array}{c}
\ds \frac{F(x,t)}{\rho(x)}\\
0
\end{array} 
\right), \hspace{0.5cm} &\\ [12pt]
{\bf A}(x)=\left(
\begin{array}{cc}
0 & \ds -\frac{1}{\rho(x)}\\
-E(x) & 0
\end{array}
\right), & 
{\bf B}(t)=\left(
\begin{array}{cc}
0 & \ds \mathscr{C}(t)\\
\mathscr{M}(t) & 0
\end{array}
\right), &
{\bf E}(t)=\left(
\begin{array}{cc}
0 & \ds \mathscr{Q}_C(t)\\
\mathscr{Q}_M(t) & 0
\end{array}
\right),
\end{array}
\label{SysOrdre1}
\end{equation}
where ${\bf A}$ and ${\bf E}$ can be discontinuous at the interface.
From \eqref{EDP-Hooke} and from \eqref{JC1interf}, one obtains the boundary-value problem
\begin{subnumcases}{\label{BVP}}
\ds \partial_t {\bf U}+{\bf A}\,\partial_x{\bf U}={\bf F},\qquad  x\neq x_0,\label{BVP-EDP}\\ [8pt]
\jump{{\bf U}(\cdot,t)}=\partial_t\left({\bf B}(t)\,\dmean{{\bf U}(\cdot,t)}\right)+{\bf E}(t)\,\dmean{{\bf U}(\cdot,t)}.
 \label{BVP-JC}
\end{subnumcases}
It is assumed that the body force is zero at the interface: ${\bf F}(x_0,t)={\bf 0}$, which is not restrictive. 


\paragraph{\bf ADER-4 scheme}A uniform grid with space step $\Delta x$ and time step $\Delta t$ is introduced, to determine an approximation ${\bf U}_j^n$ of ${\bf U}(x_j=j\,\Delta x, t_n=n\,\Delta t)$. Integration of \eqref{BVP-EDP} is done by a fourth-order ADER scheme. This explicit two-step finite-difference scheme is fourth-order accurate in space and time, using a stencil of 2 neighboring points on the right and on the left. It is stable under the usual CFL condition 
\begin{equation}
\zeta=\max(c(x))\,\frac{\Delta t}{\Delta x}\leq 1,
\label{CFL}
\end{equation}
where $c(x)=\sqrt{\frac{E(x)}{\rho(x)}}$ is the sound speed.
The reader is referred to \cite{benjaziaWM2014} for implementation details.


\subsection{Immersed interface method for time-modulated jump conditions}\label{SecNumIIM}

Discretization of the modulated jump conditions \eqref{BVP-JC} is done by implementing the Explicit Simplified Interface Method (ESIM). The philosophy of this numerical method is as follows: based on the jump conditions and the interface geometry, smooth extensions of the solution ${\bf U}$ on either side of the interface are constructed at each time step (Figure \ref{fig:IIM}). Evaluation of these extensions leads to modified values ${\bf U}^\star_j$ of the solution. At a point $x_j$ on one side of the interface, the integration scheme (ADER 4 here) then uses the usual values ${\bf U}_j^n$ if $x_j$ is on the same side, and ${\bf U}_j^\star$ if $x_j$ is on the other side of the interface. This method has been applied to various interface problems, notably to constant imperfect interfaces in 1D \cite{lombardSJSC2003} and 2D \cite{lombardSJSC2006}. 
\begin{figure}
    \centering
    \includegraphics[width=0.65\linewidth]{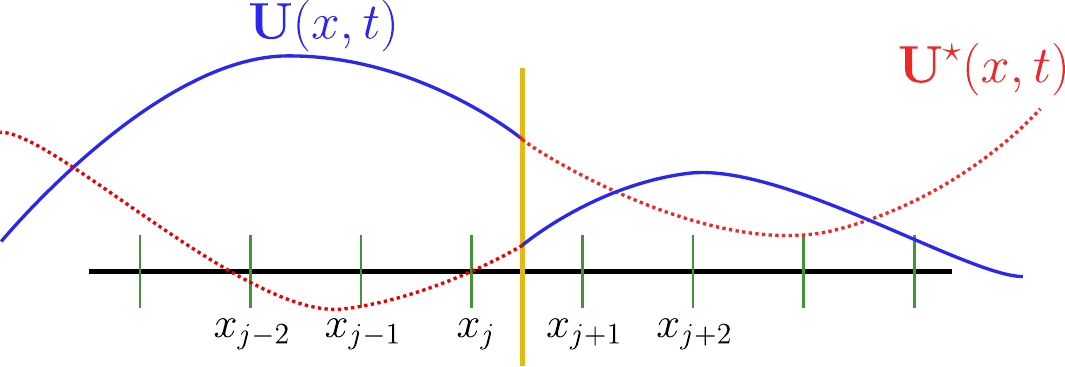}
    \caption{Scheme of the principle of ESIM at an interface with a discontinuous solution ${\bf U}(x,t)$ (blue line) and its smooth extensions ${\bf U}^\star(x,t)$ (red dotted lines).}
    \label{fig:IIM} 
\end{figure}

An important ingredient of ESIM is to determine the jump conditions satisfied by the solution and its successive spatial derivatives. As the jump conditions here vary as a function of time, an important modification has to be made to the method, as detailed now. From \eqref{jump_mean} and \eqref{BVP-JC}, we obtain
\begin{equation}
{\bf U}^+-\frac{1}{2}\,\partial_t({\bf B\,U}^+)-\frac{1}{2}\,{\bf E}\,{\bf U}^+={\bf U}^-+\frac{1}{2}\,\partial_t({\bf B\,U}^-)+\frac{1}{2}\,{\bf E}\,{\bf U}^-,
\label{Esim1}
\end{equation}
where the superscripts $\pm$ refer to the limit values, as in \eqref{Gpm}. Using \eqref{BVP-EDP}, we then have
\begin{equation}
\left({\bf I}-\frac{1}{2}\left(\partial_t{\bf B}+{\bf E}\right)\right)\,{\bf U}^++\frac{1}{2}{\bf B\,A}^+\,\partial_x{\bf U}^+=\left({\bf I}+\frac{1}{2}\left(\partial_t{\bf B}+{\bf E}\right)\right)\,{\bf U}^--{\frac{1}{2}}{\bf B\,A}^-\,\partial_x{\bf U}^-,
\label{Esim2}
\end{equation}
where ${\bf I}$ is the $2\times 2$ identity matrix. Successive time-derivatives of \eqref{Esim2} up to an order $k$ yields the matrix relation
\begin{equation}
{\bf C}_k^+(t)\,{\bf U}_k^++{\boldsymbol \Gamma}_{k+1}^+={\bf C}_k^-(t)\,{\bf U}_k^-+{\boldsymbol \Gamma}_{k+1}^-,
\label{Esim3}
\end{equation}
with the $2\,(k+1)$ vectors of limit values
\begin{equation}
\begin{array}{l}
\ds {\bf U}_k^{\pm}=\left({\bf U}^{\pm},\,\partial_x{\bf U}^{\pm},\cdots,\,\partial^k_x{\bf U}^{\pm}\right)^\top, \\ [8pt]
\ds {\boldsymbol \Gamma}_{k+1}^\pm=\left({\bf 0},\,{\bf 0},\cdots,\,{\bf 0},\,\mp\frac{(-1)^{k+1}}{2}\,{\bf B}\left({\bf A}^\pm\right)^{k+1}\,\partial_x^{k+1}{\bf U}^\pm\right)^\top,
\end{array}
\end{equation}
where ${\bf C}_k^\pm$ are $(2(k+1))\times (2(k+1))$ time-dependent matrices, ${\bf 0}$ is the $2 \times 2$ zero matrix, and ${\boldsymbol \Gamma}_{k+1}^\pm$ are $2(k+1)$ vectors. The matrices ${\bf C}_k^\pm$ depend on the bulk parameters (through the matrices ${\bf A}^\pm$) and the successive time derivatives of the jump conditions (through the matrix ${\bf B}(t)$ and ${\bf E}(t)$ in \eqref{SysOrdre1}). Moreover, the matrices ${\bf C}^\pm$ have a structure of block lower diagonal matrix; the $2 \times 2$ blocks below the diagonal depend on the successive time derivatives of $\mathscr{C}(t)$, $\mathscr{M}(t)$, $\mathscr{Q}_C(t)$ and $\mathscr{Q}_M(t)$.

Neglecting the $(k+1)$-th spatial derivatives (${\boldsymbol \Gamma}_{k+1}^\pm={\bf 0}_{2(k+1)}$) and inverting \eqref{Esim3} yields
\begin{equation}
\begin{array}{lll}
{\bf U}_k^+ &=&\left({\bf C}_k^+(t)\right)^{-1}\,{\bf C}_k^-(t)\,{\bf U}_k^-,\\ [8pt]
&:=& {\bf D}_k(t)\,{\bf U}_k^-.
\end{array}
\label{Esim4}
\end{equation}
The transfer matrices ${\bf D}_k(t)$ of order $k$ \eqref{Esim4} are the building-block of the ESIM. These are used to calculate extrapolation matrices, which are used to determine the modified values ${\bf U}^\star_i$ at each time step. Once computed, the rest of the method follows the usual lines. The reader is referred e.g.\ to \cite{touboulJoCP2020} for practical details. Four remarks can be made concerning the implementation and the properties of the ESIM:
\begin{itemize}
\item The transfer matrices ${\bf D}_k$ in \eqref{Esim4} depend on time. Consequently, the extrapolation matrices must be recalculated at each time step. This is an important difference with the cases previously dealt with by ESIM, where the extrapolation matrices were calculated once during  a preprocessing step; 
\item The calculation of ${\bf D}_k$ is tedious, especially for large values of $k$. A naive idea to simplify this calculation is to {\it freeze} the jump conditions at the instant under consideration, so that $\partial_t {\bf B}={\bf 0}$ in \eqref{Esim2}, and ${\bf D}_k(t)$ depends only on ${\bf B}$  and ${\bf E}$ and not on their successive derivatives. This is a bad idea: the time evolution of the modulated interface conditions is then poorly described, leading to inaccurate numerical results;
\item The problem under study has similarities with other time-varying interface problems. But here we are in a {\it kinematic} case, where the jump conditions are known and imposed, unlike {\it dynamic} cases such as \cite{assierJFM2014}, where the jump conditions are unknown and depend on the solution in a non-trivial way;
\item The optimal choice of the derivation order $k$ is important. In the case of static jump conditions (see e.g.\ \cite{lombardSJSC2003}), a numerical analysis of the ESIM has been carried out, and we will reuse these results here. In the limit case where the interface disappears ($\mathscr{C}_0= 0$, $\mathscr{M}_0=0$,  $\mathscr{Q}_{C_0}=0$ and  $\mathscr{Q}_{M_0}=0$), ${\bf U}_j^\star={\bf U}_j^n$ is needed to recover the scheme in homogeneous medium. This consistency property is obtained if $k$ is odd and $k \geq 2s-1$, where $s$ is the width of the stencil. Additionally, the local truncation error of a $r$-th order accurate scheme is maintained at the interface if $k\geq r$. Putting together these two conditions yields
\begin{equation}
k=\max\left(r,2s-1\right)\,\mbox{and $k$ odd}.
\label{Esim-AN}
\end{equation}

Here $r=4$ and $s=2$, so that $k=5$ is the optimal value, which will be used in the numerical experiments. We will examine numerically the convergence properties using this value in Section \ref{ValidNum} since we have no proof that it still holds in the current modulated case.
\end{itemize}


\subsection{Numerical set-up}\label{SecSimusSetup}

We consider a medium of length $L=400$ m,  discretized on $N_x=400$ grid nodes. The bulk parameters are constant and correspond to those of a bar of Plexiglass: $\rho=1200$ kg/m$^3$ and $c=2800$ m/s. The CFL number in \eqref{CFL} is $\zeta=0.95$.

A modulated interface lies at $x_0=200$ m. The values of the interface parameters used further have no physical origin, and they are only chosen to illustrate clearly the expected phenomena. Typically, modulation magnitudes of the order of 0.75 or 0.9 are used, whereas real systems lead to magnitudes of the order of 0.2 \cite{mallejacPRA2023}.

\begin{figure}[htbp]
\begin{center}
\begin{tabular}{cc}
(a) & (b) \\
\hspace{-0.3cm}
\includegraphics[width=0.48\linewidth]{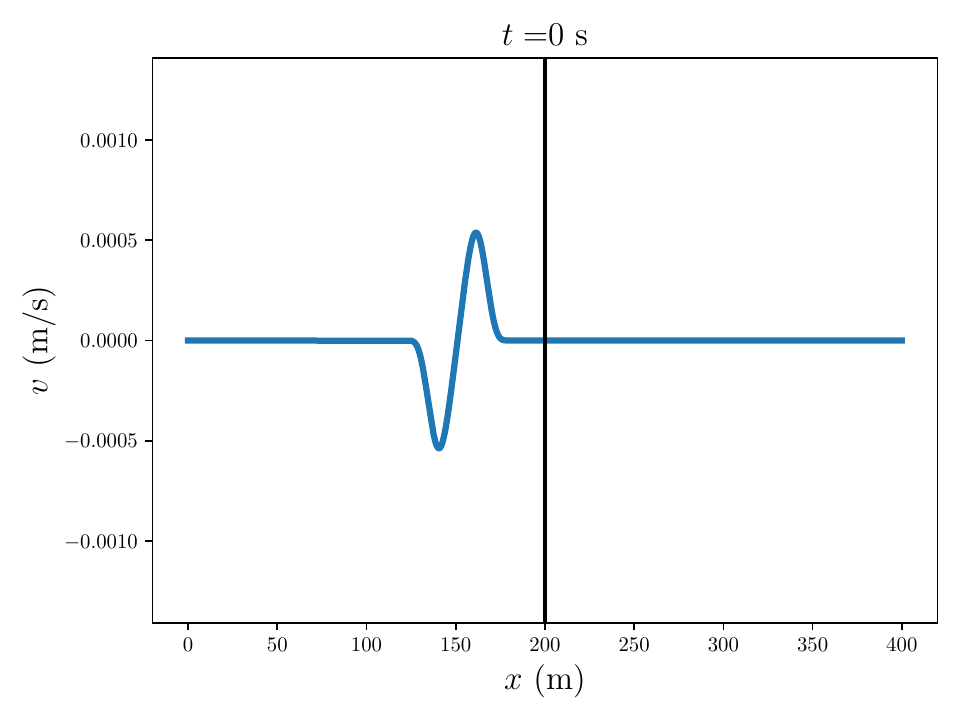} & 
\hspace{-0.3cm}
\includegraphics[width=0.48\linewidth]{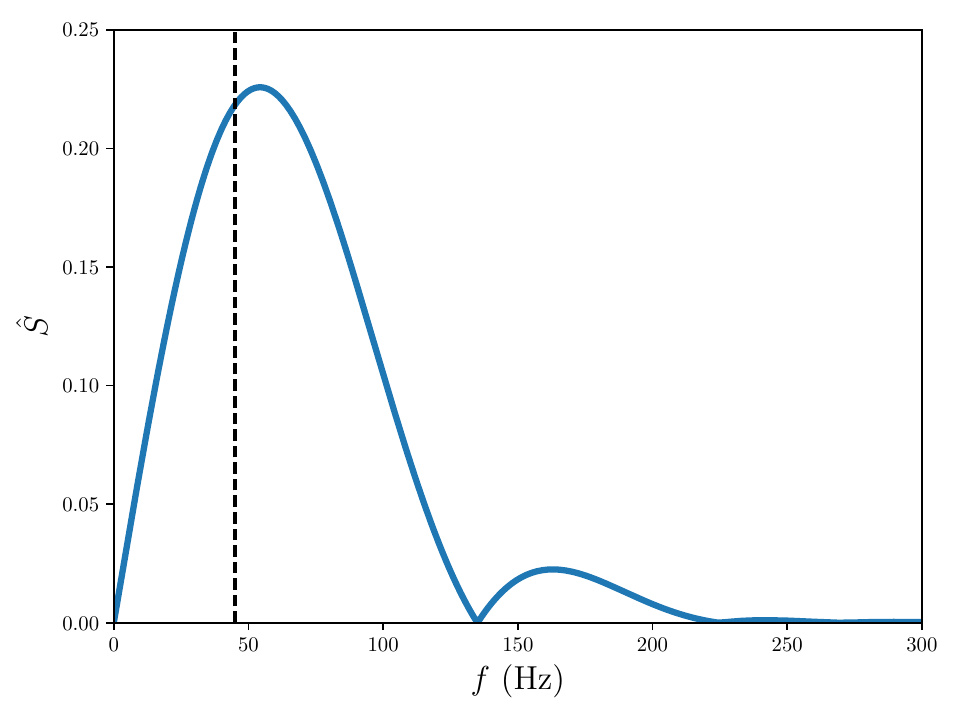}
\end{tabular}
\end{center}
\vspace{-0.5cm}
\caption{\label{FigSource}Forcing of simulations. (a) Cauchy problem \eqref{Cauchy} with $f_c=45$ Hz. (b) Spectrum of the source \eqref{Source}; the vertical dotted line denotes the central frequency $f_c$.}
\end{figure}

Two types of forcing are considered: i) a Cauchy problem with an initial data associated with a right-going wave
\begin{equation}
{\bf U}(x,0)
=
\left(
\begin{array}{c}
\rho\\
-1/c
\end{array}
\right)
\,
S\left(t_0-\frac{x}{c}\right),
\label{Cauchy}
\end{equation}
where $t_0$ allows to position the initial wave on the left of the interface, or ii) zero initial conditions with a time-dependent forcing at a Dirac source point 
\begin{equation}
F(x,t)=S(t)\,\delta(x-x_s).
\label{Dirac} 
\end{equation}
Both forcings involve the source function $S$, chosen as a combination of truncated sinusoids:
\begin{equation}
S(\xi)=\left\{
\begin{aligned}
& \sum_{m=1}^{4}a_m\sin(b_m\,\omega_c\,\xi) && \text{if } -0<\xi <1/f_c \\
& 0 && \text{otherwise,}
\end{aligned}\right.
\label{Source}
\end{equation}
where $f_c=\omega_c/(2\pi)$ is the central frequency, $b_m=2^{m-1}$, the coefficients $a_m$ are $a_1=1$, $a_2=-{21}/{32}$, $a_3 = {63}/{768}$, and $a_4=-{1}/{512}$. This source entails $C^6([0,+\infty[)$ smoothness. Figure \ref{FigSource} illustrates a Cauchy problem (a) and the spectral content of $S$ (b).

The time dependence of $\mathscr{C}$ and $\mathscr{M}$ writes
\begin{equation}
\begin{array}{rcl}
\ds \mathscr{C}(t)&=&\mathscr{C}_0\left(1+\varepsilon_C\,\phi_C(t)\right),\\ [8pt]
\ds \mathscr{M}(t)&=&\mathscr{M}_0\left(1+\varepsilon_M\,\phi_M(t)\right),\\ [8pt]
\ds \mathscr{Q}_C(t)&=&\mathscr{Q}_{C_0}\left(1+\varepsilon_{Q_C}\,\phi_{Q_C}(t)\right),\\ [8pt]
\ds \mathscr{Q}_M(t)&=&\mathscr{Q}_{M_0}\left(1+\varepsilon_{Q_M}\,\phi_{Q_M}(t)\right),
\end{array}
\label{KM-T}
\end{equation}
where $\mathscr{C}_0\geq0$, $\mathscr{M}_0\geq0$, $\mathscr{Q}_{C_0}\geq0$, $\mathscr{Q}_{M_0}\geq0$, and $\varepsilon_X\,\phi_X >-1$, with $X=C,M,Q_C,Q_M$. The functions $\phi_C$, $\phi_M$, $\phi_{Q_C}$ and $\phi_{Q_M}$ may differ from each other. If $\phi_X=0$ or $\varepsilon_X=0$, then $\mathscr{C}$, $\mathscr{M}$, $\mathscr{Q}_C$ and $\mathscr{Q}_M$ do not depend on $t$, and one recovers the static conditions of imperfect contact investigated in \cite{lombardSJSC2003}. In addition to the sinusoidal modulation \eqref{ModulSinus}, two modulation functions are introduced for the numerical simulations: a quasi-periodic modulation and a rectangular one. It follows
\begin{equation}
\phi_{X}(t)=\left\{
\begin{array}{ll}
\ds \sin(\Omega t), & (\mbox{sinusoidal}),\\ [8pt]
\ds \sin(\Omega t)+\sin\left(\sqrt{2}\,\Omega t\right), & (\mbox{quasi-periodic}),\\ [8pt]
\ds (-1)^n,\quad \mbox{with } n=\left \lfloor\left( \frac{\Omega \,t}{2\pi} \mod 1\right) - \nu  \right\rfloor \quad & (\mbox{rectangular}),
\end{array}
\right.
\label{FuncModul}
\end{equation}
where $\lfloor g \rfloor$ denotes the floor function of $g$, $0>\nu>1$, and $\Omega=2\pi\,f_m$, where $f_m$ is the \textit{modulation frequency}. In this work, we assume $\phi_C=\phi_M=\phi_{Q_C}=\phi_{Q_M}$ (which implies that the modulation frequencies are the same), but this is only for simplicity. 
The numerical scheme presented above does not depend on this assumption. Therefore, this scheme and our implementation of it remain valid for functions that differ from one another.


\subsection{Validation}\label{ValidNum}

In this part, a validation of the numerical scheme is proposed.
\begin{figure}[htpb]
\begin{center}
\begin{tabular}{cc}
(a) & (b) \\
\hspace{-0.3cm}
\includegraphics[width=0.39\linewidth]{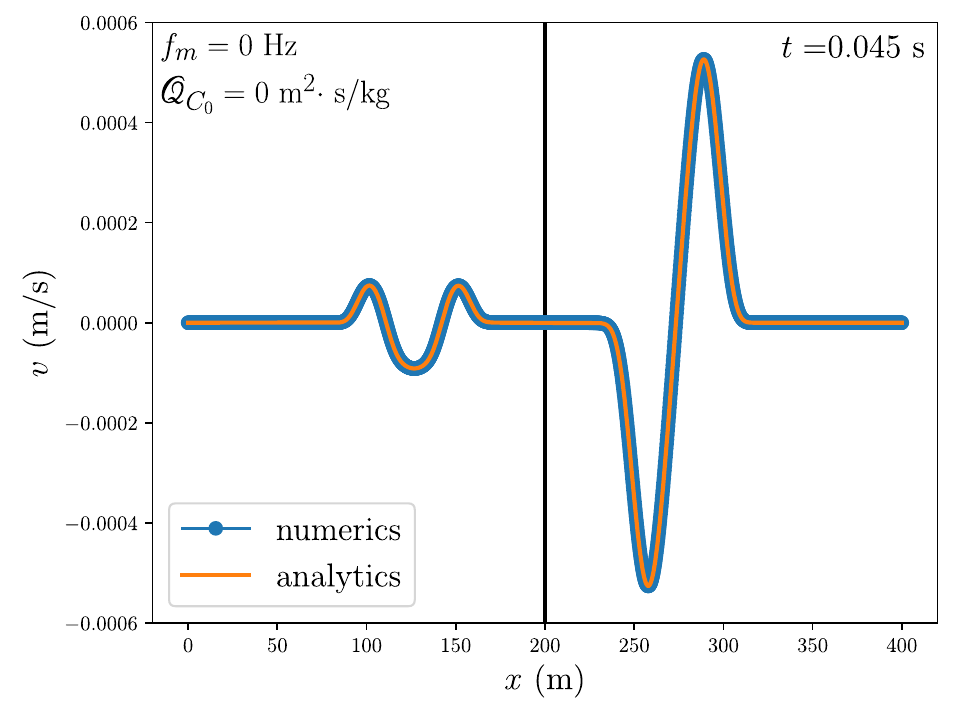} & 
\hspace{-0.3cm}
\includegraphics[width=0.39\linewidth]{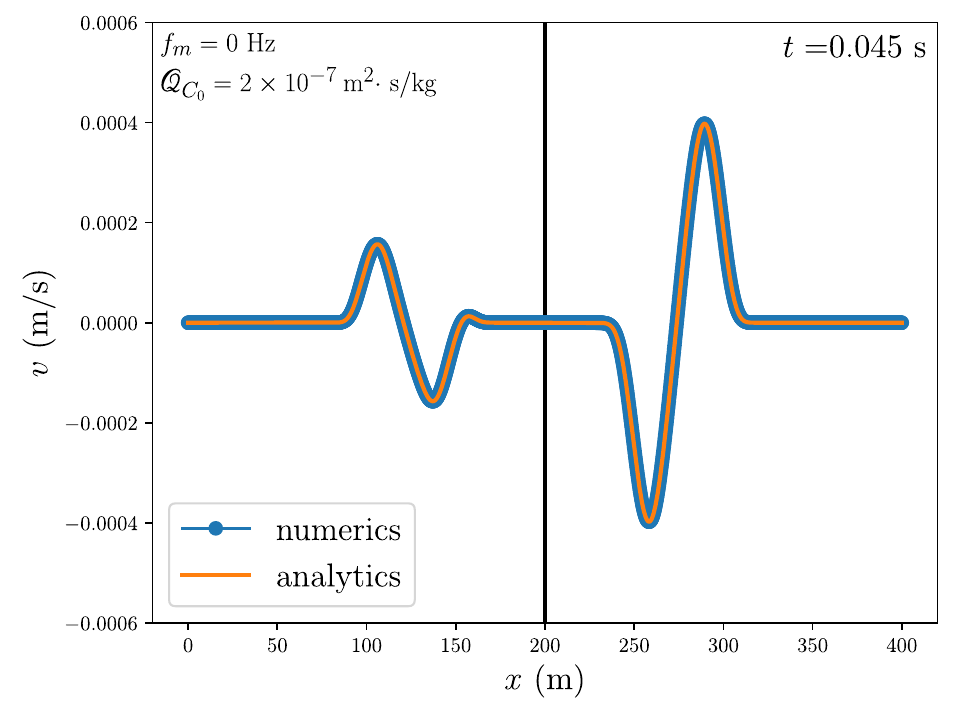}\\
(c) & (d) \\
\hspace{-0.3cm}
\includegraphics[width=0.39\linewidth]{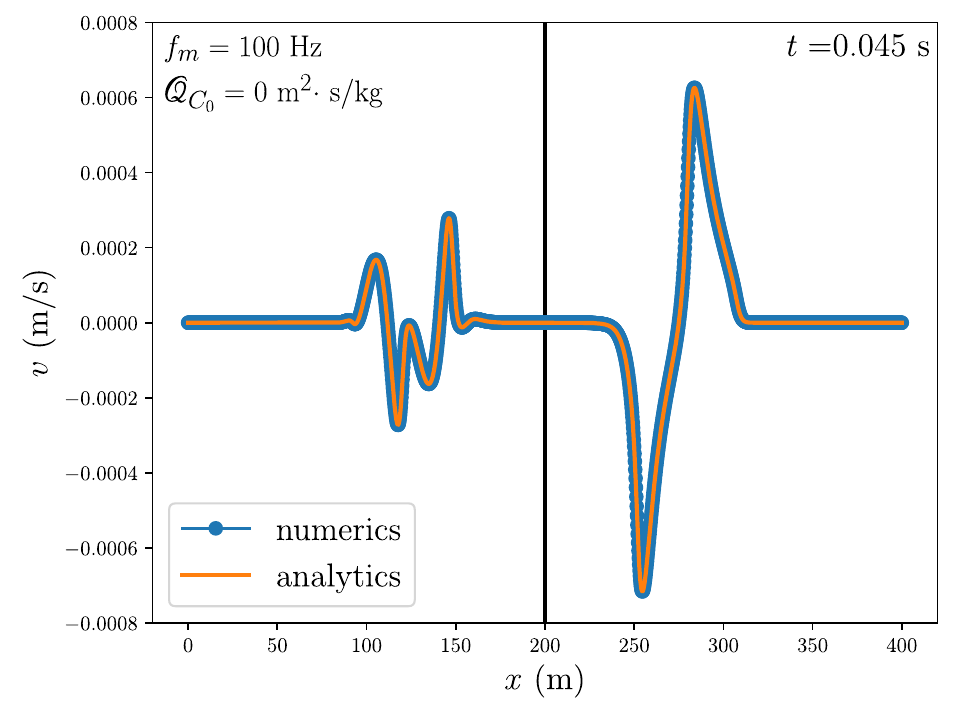} & 
\hspace{-0.3cm}
\includegraphics[width=0.39\linewidth]{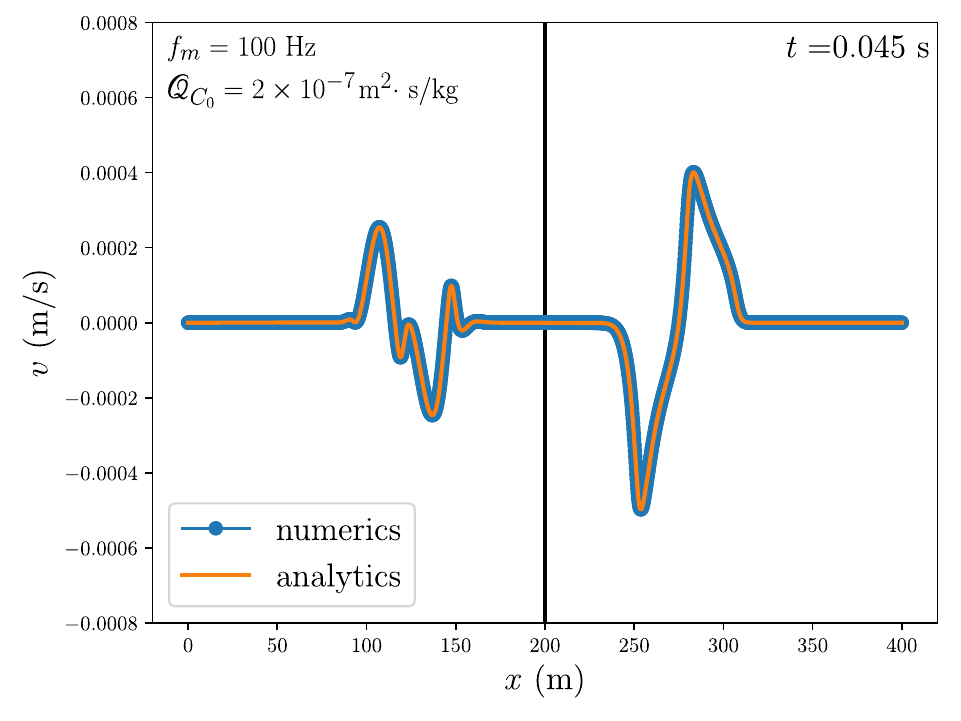}\\
(e) & (f) \\
\hspace{-0.3cm}
\includegraphics[width=0.39\linewidth]{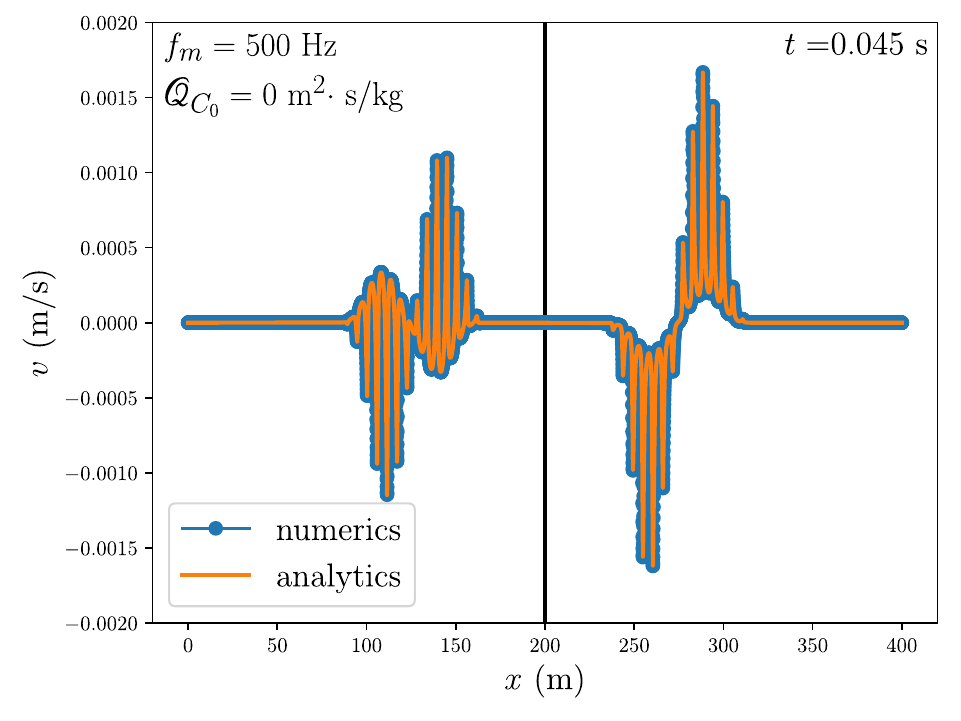} & 
\hspace{-0.3cm}
\includegraphics[width=0.39\linewidth]{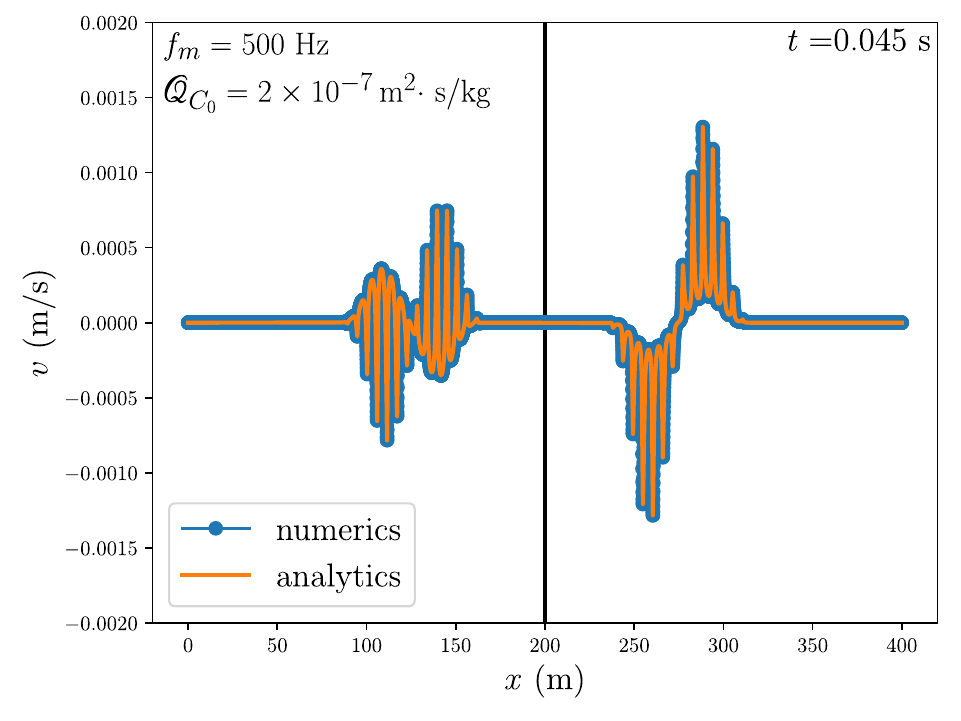}\\
(g) & (h) \\
\hspace{-0.3cm}
\includegraphics[width=0.39\linewidth]{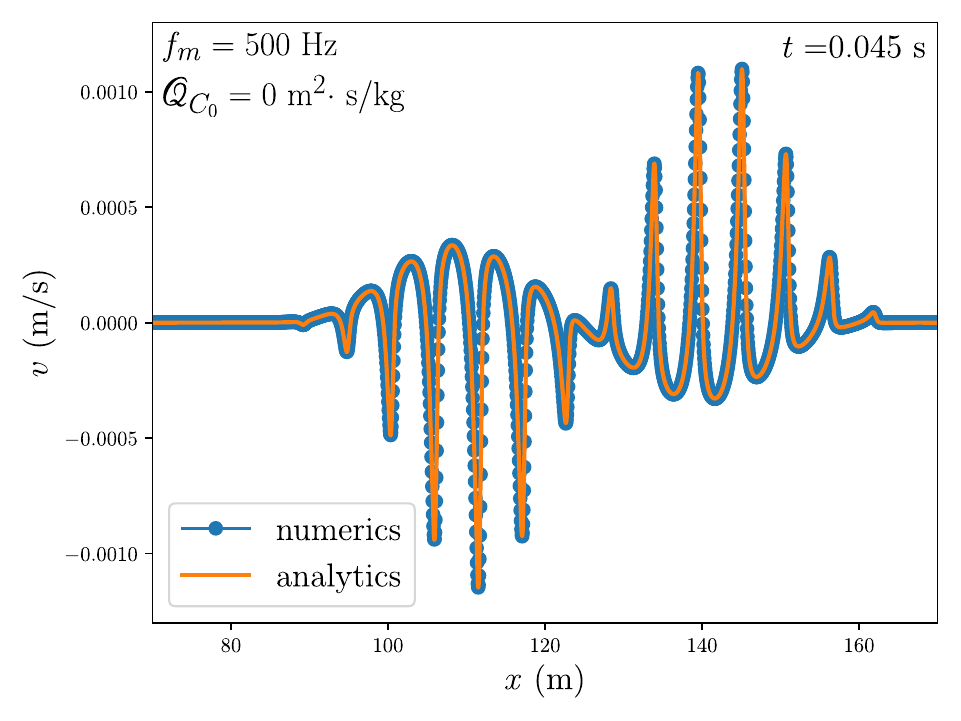} & 
\hspace{-0.3cm}
\includegraphics[width=0.39\linewidth]{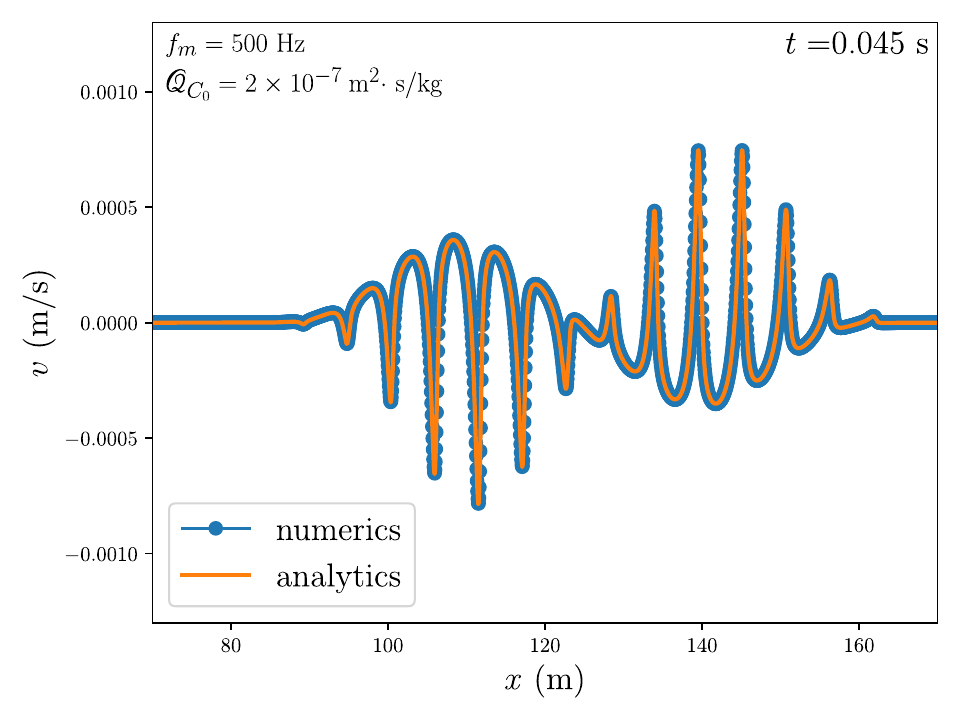}\\
\end{tabular}
\end{center}
\vspace{-0.5cm}
\caption{\label{FigModulK}Scattering by an interface with modulated stiffness, for various frequencies of sinusoidal modulation $f_m$. Snapshot of the solution at $t=0.045$ s with (a-b): no modulation, (c-d): $f_m=100$ Hz, (e-f) $f_m=500$ Hz, and (g-h) is a zoom on the reflected wave when $f_m=500$ Hz. Left: cases without dissipation. Right: $\mathscr{Q}_{C_0}=2\,\times 10^{-7}$ m$^2\cdot$s/kg and $\mathscr{Q}_{M_0}=0$.}
\end{figure}
A pulse is initially located on the left of the interface \eqref{Cauchy}. The interface parameters are $\mathscr{K}_0=2.45$ GPa/m, $\mathscr{M}_0=0$. The central frequency if $f_c=30$ Hz. A semi-analytic solution of \eqref{BVP} can be obtained by the method of characteristics when $\mathscr{M}_0=0$ or $\mathscr{C}_0=0$. Technical details are given in Appendix \ref{AppAnalytic} in the case where only the stiffness is modulated.  

Figure \ref{FigModulK}-(a) and (b) illustrates the case without modulation: $\varepsilon_C=0$ . The other subfigures are computed using $\varepsilon_C=\varepsilon_{Q_C}= 0.9$ and various values of the modulation frequency $f_m=\frac{\Omega}{2\pi}$. Cases without dissipation in the interface conditions ($\mathscr{Q}_{C_0}=0$ and $\mathscr{Q}_{M_0}=0$) are presented on the left column and cases with modulated dissipation ($\mathscr{Q}_{C_0}=2\,\times 10^{-7}$ m$^2\cdot$s/kg and $\mathscr{Q}_{M_0}=0$) are on the right side. The value of $\mathscr{Q}_{C_0}$ have been chosen to have an effect of the same order as the compliance effect. One observes an excellent agreement between the exact and numerical values, which validates the numerical method detailed in Section \ref{SecNum}. There is also an enrichment of peaks in the scattered waves when $|f_m-f_c|$ increases.

The particular case where only $\mathscr{Q}_{C}$ is non-zero and modulated is presented on Figure \ref{FigModulQ_AN}. Here again, the agreement between the analytical solution and the result of the numerical model is excellent. On the left side, the envelope curves has been added to highlight the Proposition \ref{PropEnvelop}. In this case with a sinusoidal modulation, the extreme values of $\mathscr{Q}_{C}(t)$, used to determine the dotted curves, are equal to $\mathscr{Q}_{C_{0}}(1\pm\varepsilon_{Q_C})$.
\begin{figure}[htpb]
\begin{center}
\begin{tabular}{cc}
(a) & (b) \\
\hspace{-0.3cm}
\includegraphics[width=0.48\linewidth]{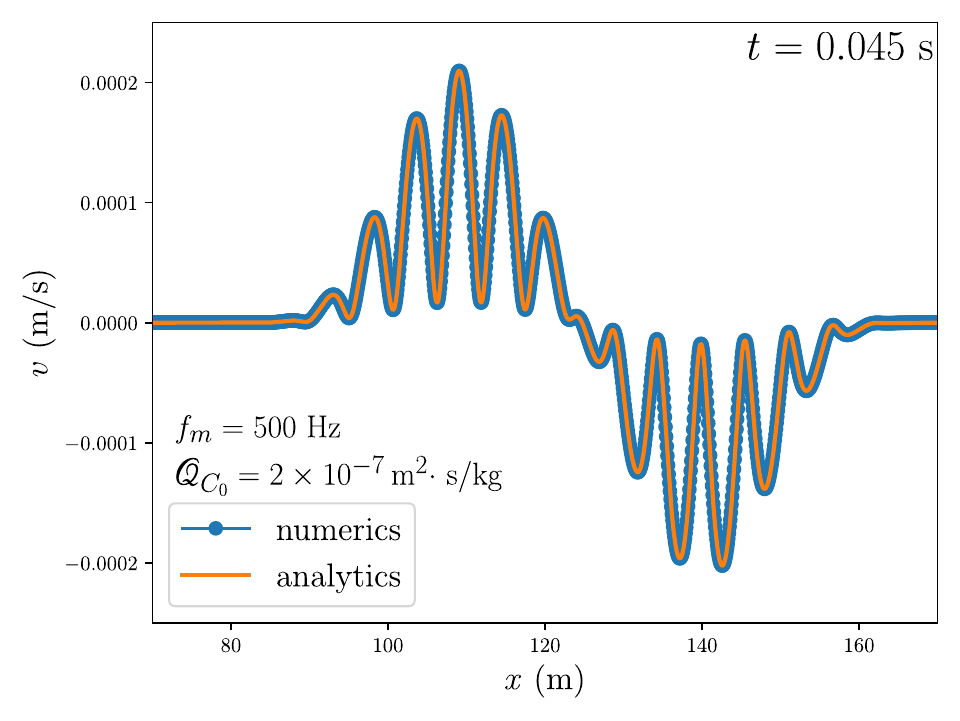} & 
\hspace{-0.3cm}
\includegraphics[width=0.48\linewidth]{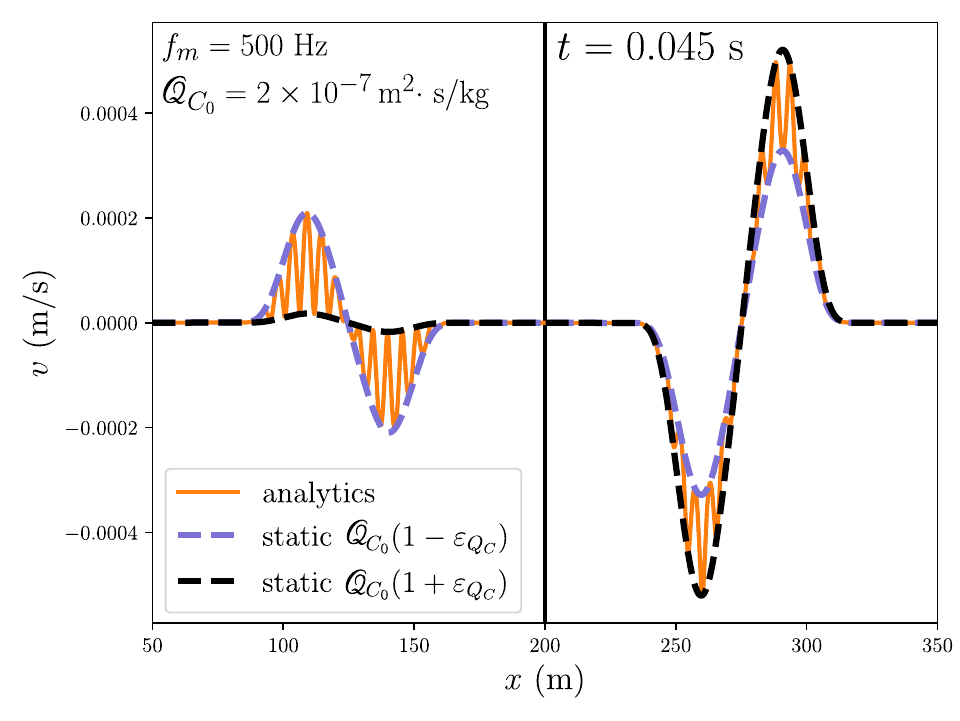}\\
\end{tabular}
\end{center}
\vspace{-0.5cm}
\caption{\label{FigModulQ_AN}Scattering by an interface with modulated dissipation ($\mathscr{Q}_{C_0}=2\times10^{-7}$ m$^2\cdot$s/kg and the other parameters are zero) at the frequency $f_m=500$ Hz at $t=0.045$ s. (a): Snapshot of the solution with a zoom on the reflected wave, (b): analytic solution with envelope curves (dashed lines).}
\end{figure}

In the rest of this work, we only consider cases with $\mathscr{Q}_{C_0}=0$ and $\mathscr{Q}_{M_0}=0$.
\begin{figure}[htpb]
\begin{center}
\begin{tabular}{cc}
(a) & (b) \\
\hspace{-0.3cm}
\includegraphics[width=0.48\linewidth]{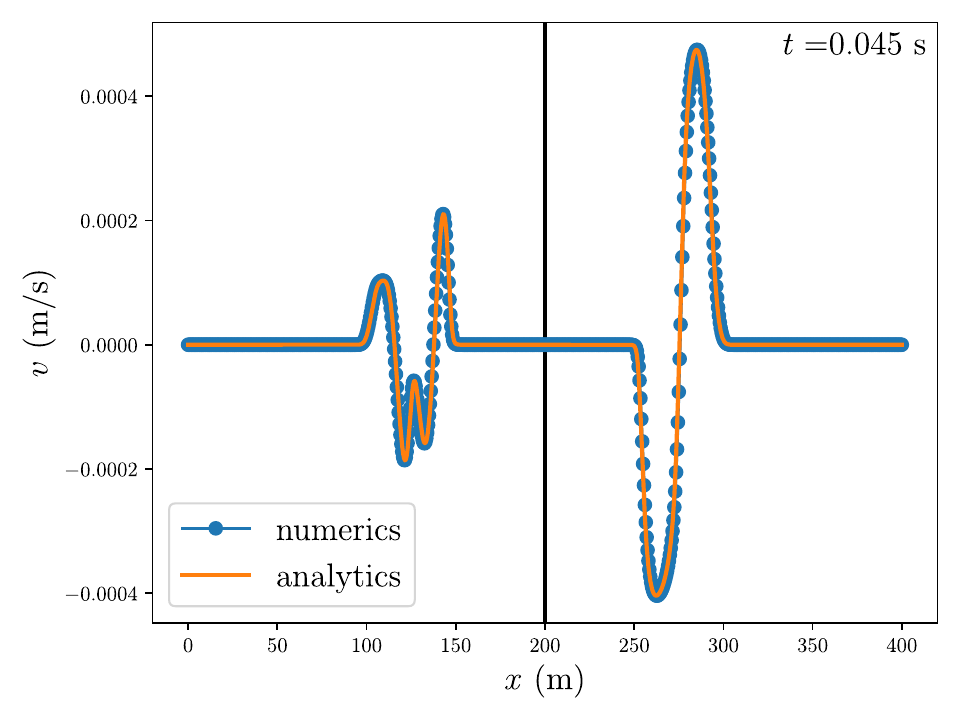} & 
\hspace{-0.3cm}
\includegraphics[width=0.48\linewidth]{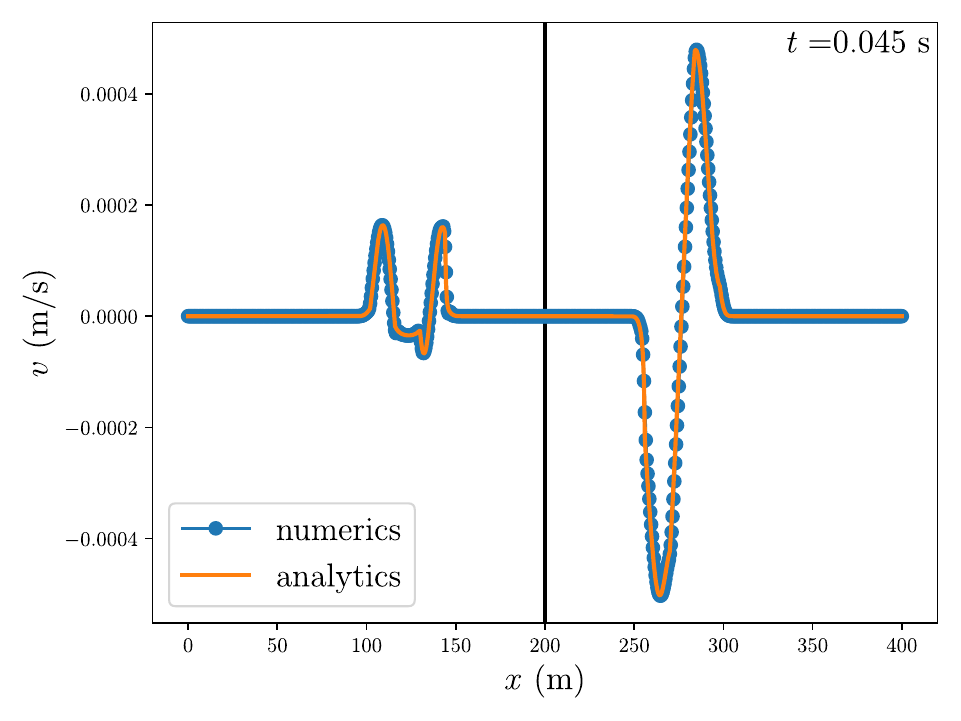}\\
\end{tabular}
\end{center}
\vspace{-0.5cm}
\caption{\label{FigModulK-NH}Scattering by an interface with modulated stiffness, for non-harmonic modulation at the frequency $f_m=100$ Hz at $t=0.045$ s. (a): quasi-periodic modulation, (b): square modulation.}
\end{figure}

Figure \ref{FigModulK-NH} illustrates the two other modulation types given by \eqref{FuncModul} at the frequency $f_m=100$ Hz. Figure \ref{FigModulK-NH}-(a) displays the results obtained in the quasi-periodic modulation, while Figure \ref{FigModulK-NH}-(b) investigates the rectangular modulation ($\nu=0.65$ in \eqref{FuncModul}). Here again, there is an excellent agreement between the exact and numerical solution. In the case of a square modulation, the terms corresponding to the derivatives of the modulation in \eqref{Esim4} are set to zero.

Measures of convergence have been performed to verify the conclusions obtained in \cite{lombardSJSC2003}. The Figure \ref{figConvergence} presents the evolution of the error $\varepsilon_v$ committed on the velocity field $v$ at $t=0.045$ s, and defined by:
\begin{equation}
    \varepsilon_v=\sqrt{\Delta x\sum_{i=0}^{N_x}\, \left(v(x_i,t)-v_i^n\right)^2}.
\end{equation}
On the left side, convergence measurement are shown for a sinusoidal modulation of the compliance $\mathscr{C}$ using $\mathscr{K}_0=2.45$ GPa/m and $\mathscr{M}=0$. On the right side, the analog case is presented (\textit{i.e.\ } $\mathscr{M}_0=2\,\times 10^4$ kg/m$^2$ and $\mathscr{C}=0$). The other numerical values of the parameters are $\varepsilon_{C,M}=0.75$, $f_c=45$ Hz and two different frequencies of a sine modulation are tested: $f_m=100$ Hz (plain blue line) and $f_m=10$ Hz (orange dashed line). 
\begin{figure}[htpb]
\begin{center}
\begin{tabular}{cc}
(a) & (b) \\
\hspace{-0.3cm}
\includegraphics[width=0.48\linewidth]{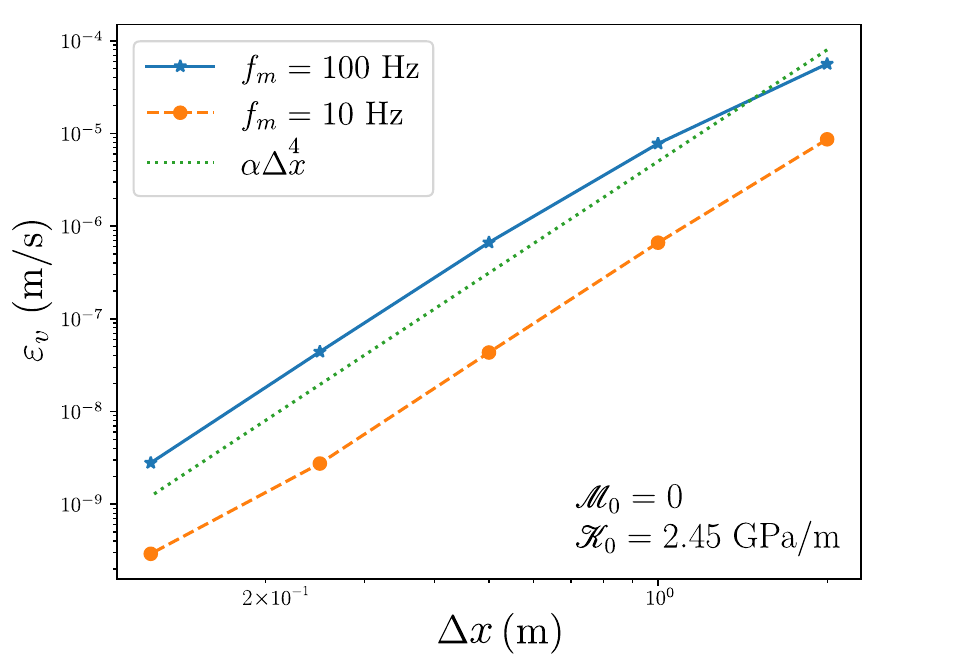} & 
\hspace{-0.3cm}
\includegraphics[width=0.48\linewidth]{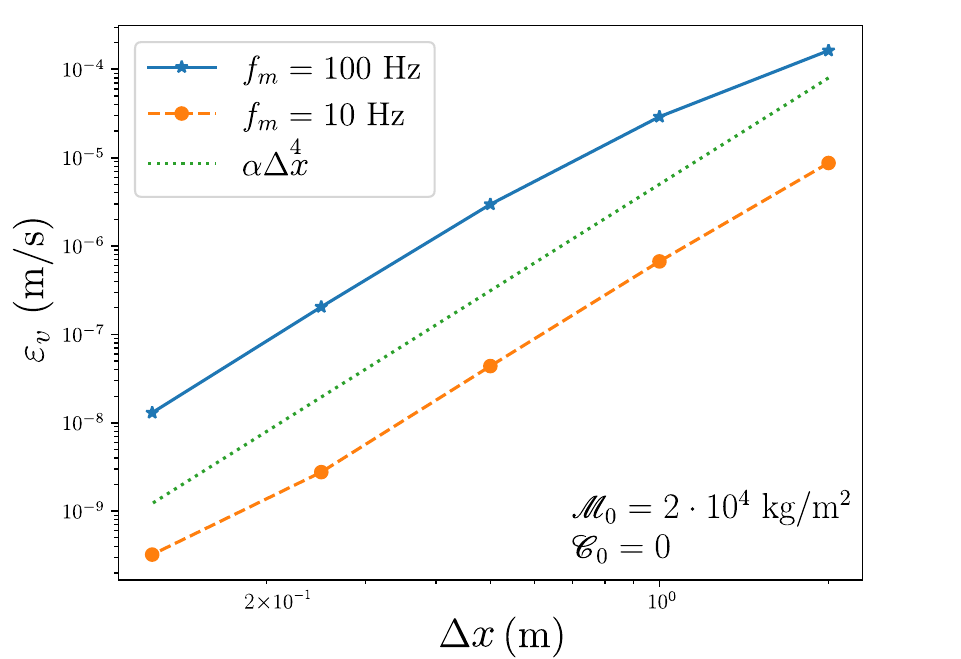}\\
\end{tabular}
\end{center}
\vspace{-0.5cm}
\caption{\label{figConvergence}Convergence measurements for $f_m =100$ Hz in blue line and $f_m=10$ Hz in orange dashed line in the case where (a) $\mathscr{C}(t)\neq 0$ and $\mathscr{M}(t)=0$, and (b)  $\mathscr{C}(t)=0$ and $\mathscr{M}(t)\neq 0$. The green dotted line shows a slope of order 4 (in log-log scale).}
\end{figure}
These measurements show a convergence of order $4$ for both modulations and both frequencies. The two curves corresponding to $f_m=100$ Hz present a slope discontinuity for the higher value of $\Delta x$ but we only consider the limit for small values of $\Delta x$. Moreover, due to the generation of harmonics with larger frequencies, the value of the error is more significant for a modulation frequency at $f_m=100$ Hz than at $f_m=10$ Hz.


\section{Numerical experiments}\label{SecSimus}

Here we illustrate the various properties analyzed in Section \ref{SecContinu}. We conclude with an illustration of non-reciprocity. 

\subsection{Modulation of energy}\label{SecSimusNRJ}

\begin{figure}[htpb]
\begin{center}
\begin{tabular}{cc}
(a) & (b) \\
\includegraphics[width=0.48\linewidth]{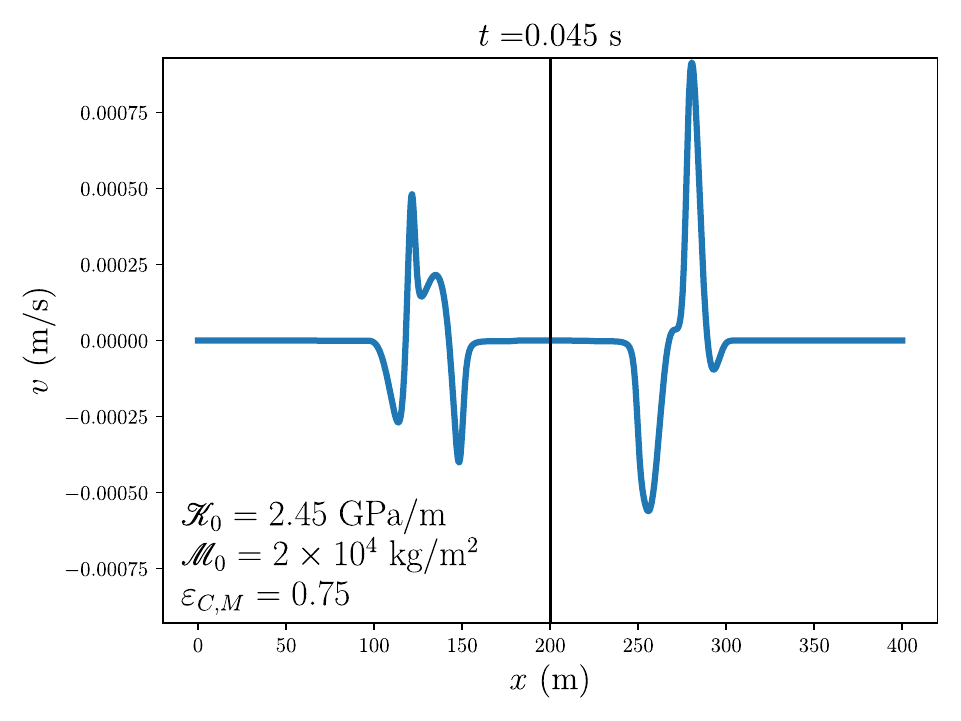} & 
\includegraphics[width=0.48\linewidth]{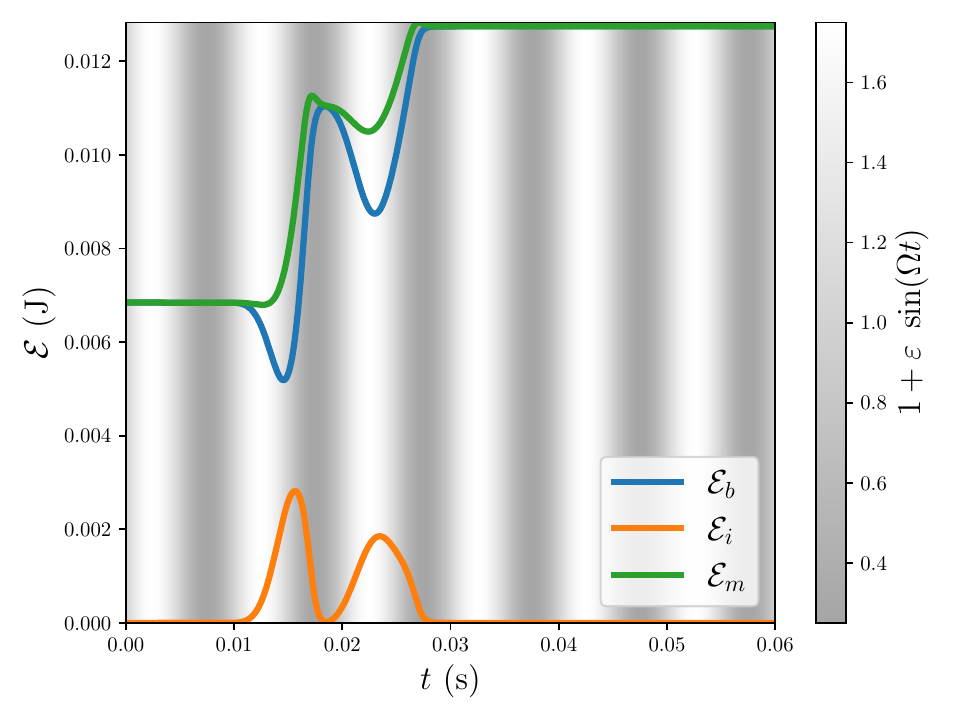}\\ 
(c) & (d) \\
\includegraphics[width=0.48\linewidth]{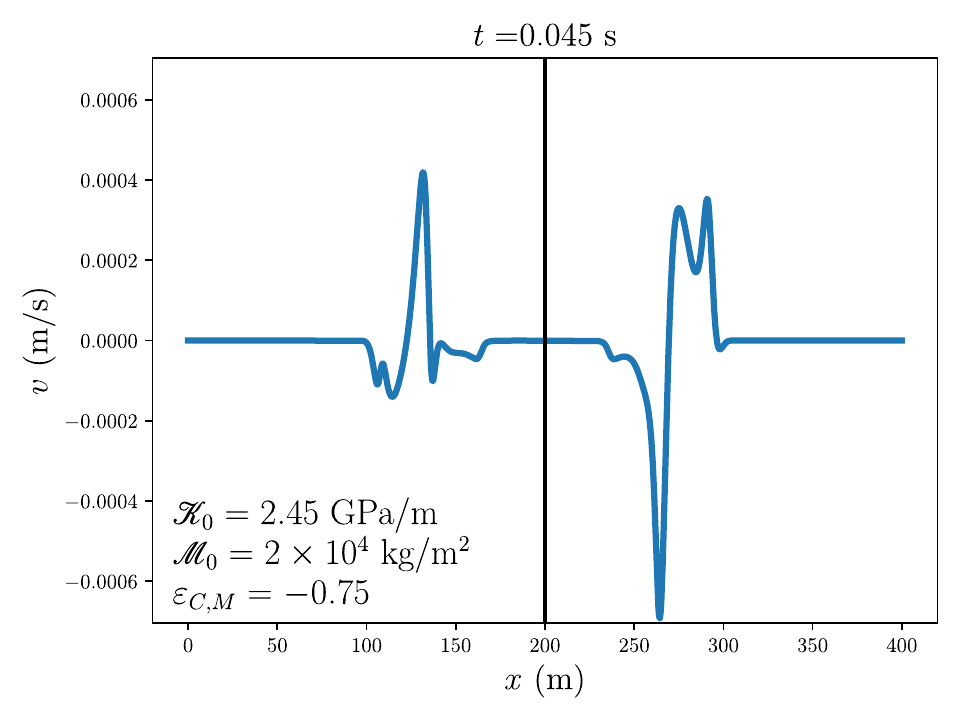} &
\includegraphics[width=0.48\linewidth]{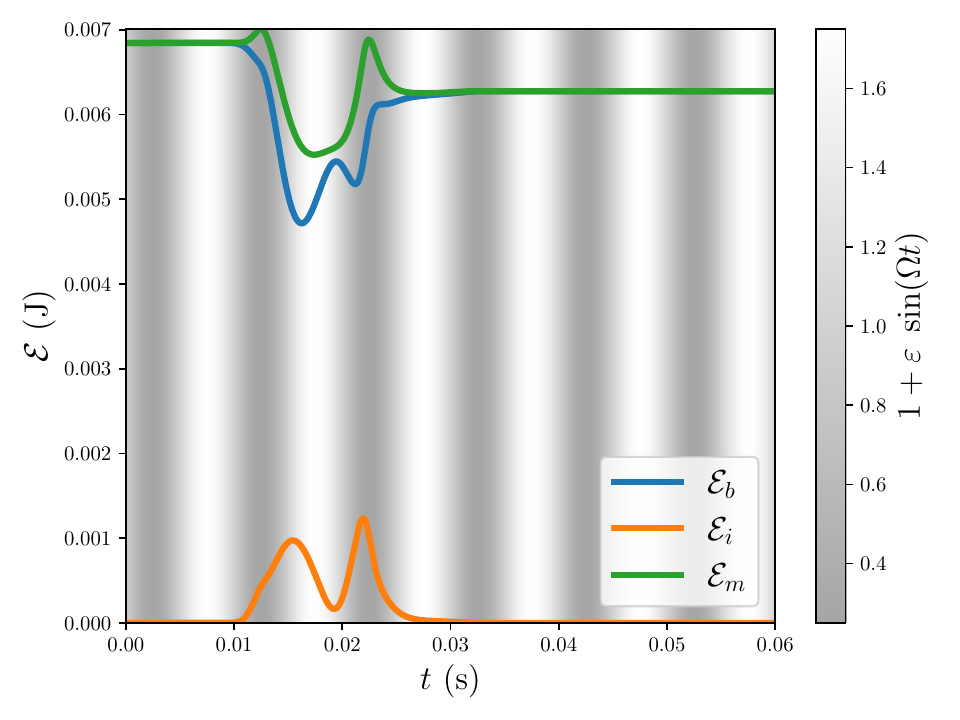}\\
(e) & (f) \\
\includegraphics[width=0.48\linewidth]{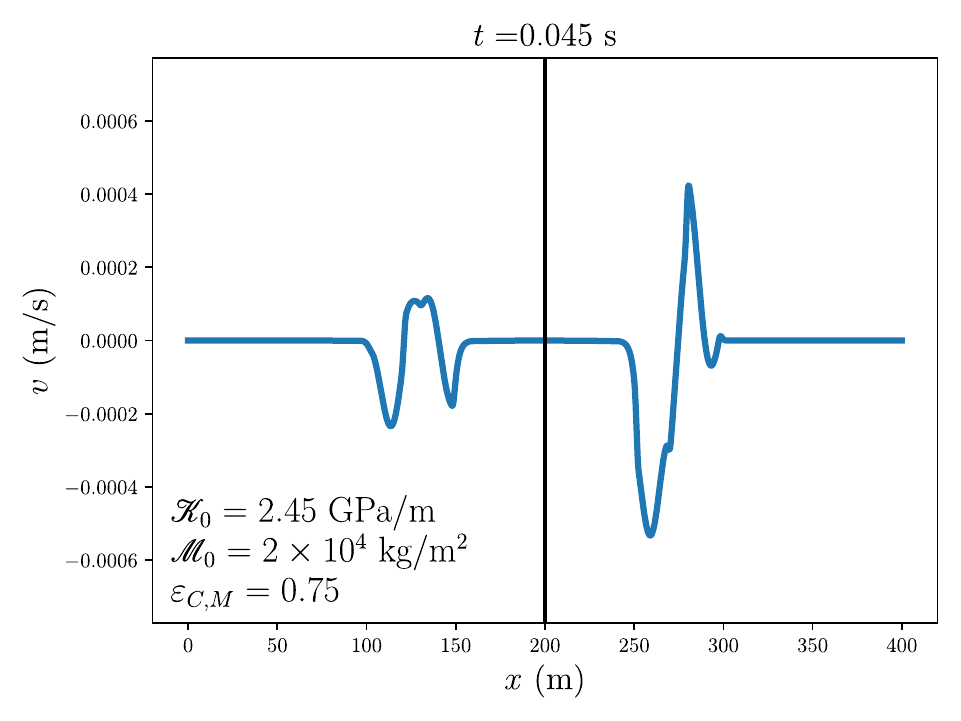} & 
\includegraphics[width=0.48\linewidth]{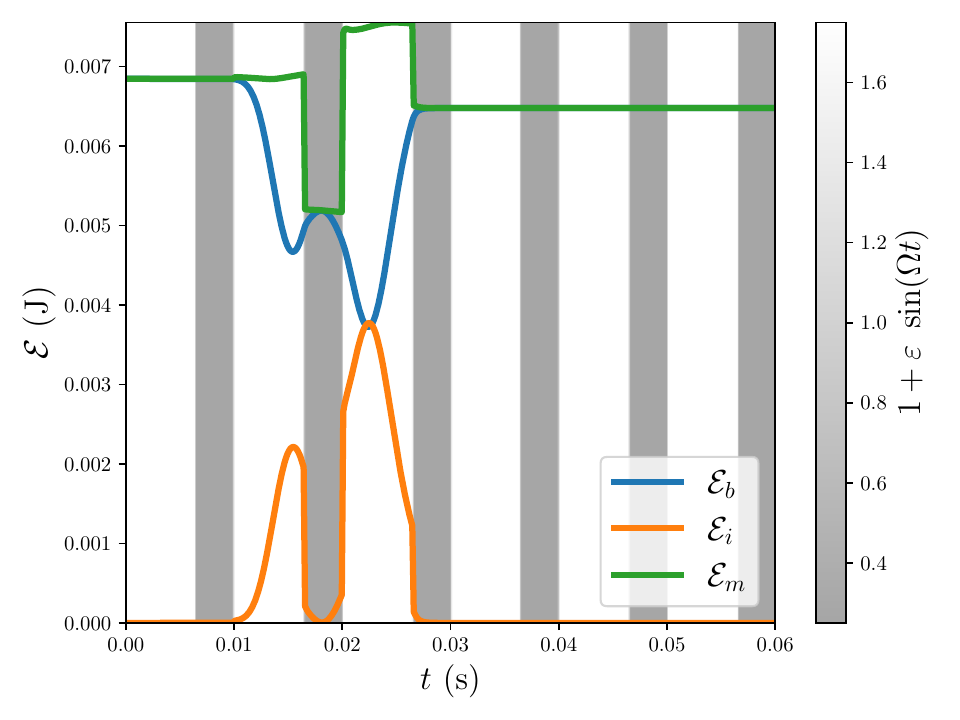}\\ 
\end{tabular}
\end{center}
\vspace{-0.5cm}
\caption{\label{FigNinterfAmpli}Interaction of a pulse centered on $f_c=45$ Hz with a modulated interface. Left row: snapshot of the scattered velocity $v$ at $t=0.045$ s. Right row: time evolution of the energy, where the gray shaded areas denote the evolution of the interface parameters with modulation frequency $f_m=100$ Hz. Sinusoidal modulation using (a-b): $\varepsilon_{C,M}=0.75$  ; (c-d): $\varepsilon_{C,M}=-0.75$ and a rectangular modulation (e-f) with $\varepsilon_{C,M}=0.75$ and $\nu=0.65$.}
\end{figure}

A pulse of the form \eqref{Cauchy} with central frequency $f_c=45$ Hz is initially located on the left of the interface. The interface parameters are $\mathscr{K}_0=2.45$ GPa/m and $\mathscr{M}_0=2\,\times 10^4$ kg/m$^2$ with a modulation frequency $f_m=100$ Hz. Figure \ref{FigNinterfAmpli} investigates the scattered waves for different modulations. The left column shows a snapshot of the velocity $v$. The right column shows the temporal evolution of the energies \eqref{NRJv}-\eqref{NRJi} as a function of time. The background gray scale represents the time evolution of the interface parameters. 

The total energy ${\mathcal E}_m$ is constant until $t\approx 0.01$ s, when the incident wave impacts the interface. The initially zero interface energy ${\mathcal E}_i$ then increases and remains non-zero as long as the wave/interface interaction lasts. As a consequence ${\mathcal E}_m$ also varies. After the wave has passed through the interface, the interface energy is radiated and reconverted into bulk energy, which is then conserved. Depending on the sign of $\varepsilon_{C,M}$, different evolutions of energy are observed;
\begin{itemize}
\item In the first case $\varepsilon_{C,M}=0.75$ (a-b), the final total energy value ($=0.0013$) has almost doubled compared with the initial energy ($=0.007$);
\item In the second case, the modulation is inverted by choosing $\varepsilon_{C,M}=-0.75$  (c-d). The total amount of energy decreases from $=0.007$ to $=0.006$;
\item In the third case, a rectangular modulation is applied with $\varepsilon_{C,M}=0.75$ and $\nu=0.65$ (e-f). The interface energy (and consequently the total energy) is then discontinuous.
\end{itemize}
In the first case, the interface is reached when the value of $\varepsilon_{C,M} \sin(\Omega t)$ is at its maximum value, whereas these values are negative in the second case. In conclusion, the work used to modulate the interface parameters can logically increase or decrease the total energy, as expected and predicted in our remarks to Proposition \ref{PropNRJ}.

\begin{figure}
    \centering
    \begin{tabular}{cc}
    (a) & (b) \\
    \includegraphics[width=0.48\linewidth]{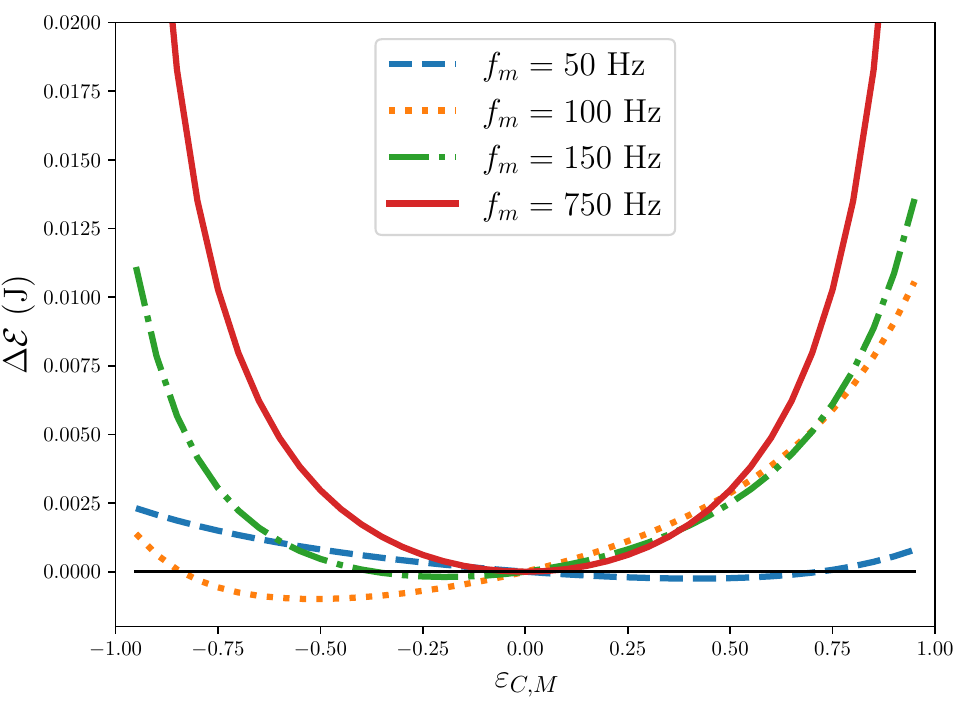} & \includegraphics[width=0.48\linewidth]{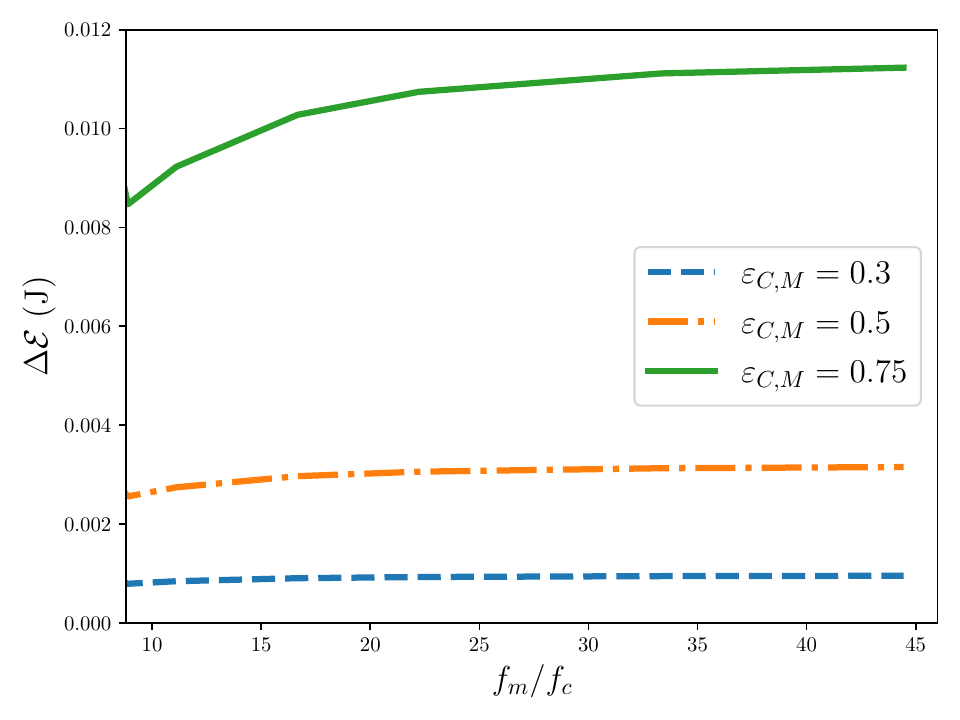}\\
    \end{tabular}
    \caption{Influence of the modulation parameters on the variation of energy after crossing the interface. (a) influence on the amplitude of the modulation $\varepsilon_{C,M}$ and (b) influence of the ratio $f_m/f_c$.} 
    \label{fig:deltaener}
\end{figure}

Figure \ref{fig:deltaener}(a) presents the variation of the energy $\Delta{\mathcal E}=\mathcal{E}_m(t)-\mathcal{E}_m(0)$ between $t=0$~s and $t=0.04$~s (time after which the wavelet has completely crossed the interface) for different values of the amplitude of modulation $\varepsilon_{C,M}\in [-0.95,0.95]$, and for four different modulation frequencies ($f_m=50$ Hz, $f_m=100$ Hz, $f_m=150$ Hz and $f_m=750$ Hz). This figure shows that the gain of energy is not linear with respect to $\varepsilon_{C,M}$. Moreover, for low values of $f_m$, there exists a range of modulation for which energy is lost, but this range depends also on the choice of the modulation frequency. For the highest frequencies, the variation of energy is always postive and is symmetric because the phase shift due to the modification of sign of $\varepsilon_{C,M}$ does not have a significant effect. Figure \ref{fig:deltaener}(b) shows the variation of the energy $\Delta{\mathcal E}$ as a function of $\frac{f_m}{f_c}$ for three values of $\varepsilon_{C,M}$. At lower frequencies (until $f_m=400$ Hz, \textit{i.e.} $\frac{f_m}{f_c}\approx 9$, not presented here), the variation of energy $\Delta{\mathcal E}$ depends strongly on the modulation frequency $f_m$. This is because, at such low modulation frequency, the state of the interface when the pulse crosses it varies a lot with $f_m$. On the contrary, at higher frequencies (presented here), $\Delta{\mathcal E}$ appears to tend monotonously to a limit value, that could be interpreted as an effective (averaged in time) behaviour.


\subsection{Generation of harmonics}\label{SecSimusHarmonics}

Here we illustrate the calculations carried out in Section \ref{SecContinuHarm}. The reflection and transmission coefficients $R_k$ and $T_k$ in \eqref{AnsatzRT} are determined by solving \eqref{RkTk} with $N=12$ Fourier modes. This number has been chosen experimentally to ensure the convergence. The interface parameters are $\mathscr{K}_0=2.45$ GPa/m and $\mathscr{M}_0=2\,\times 10^4$ kg/m$^2$, with modulation amplitudes $\varepsilon_C=\varepsilon_M=0.75$. 

\begin{figure}[htbp]
\begin{center}
\begin{tabular}{cc}
(a) & (b) \\
\hspace{-0.3cm}
\includegraphics[width=0.48\linewidth]{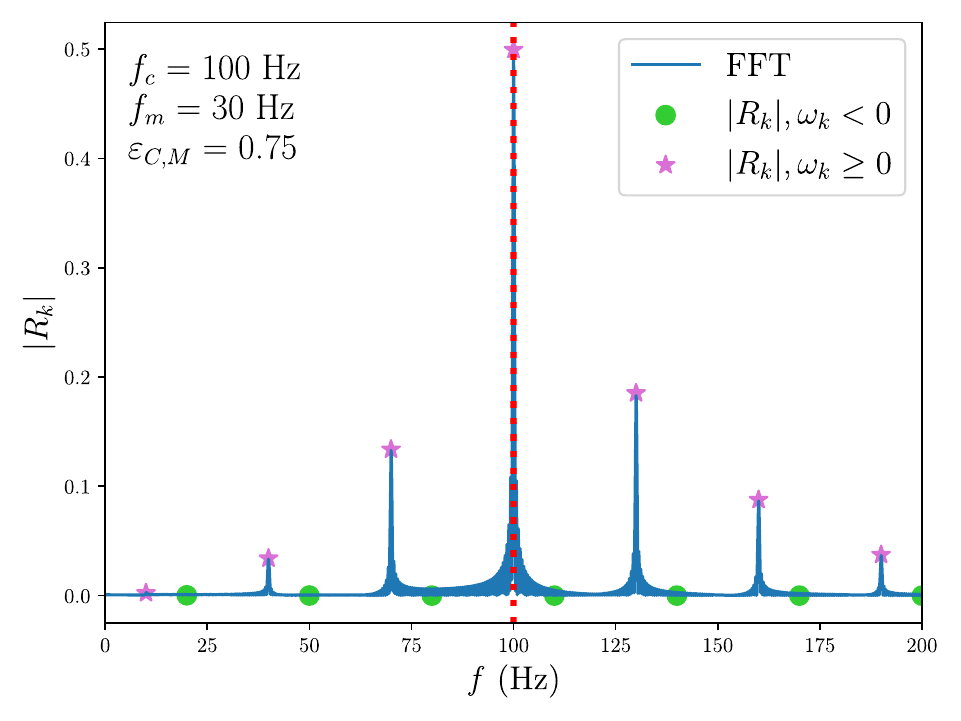} & 
\hspace{-0.3cm}
\includegraphics[width=0.48\linewidth]{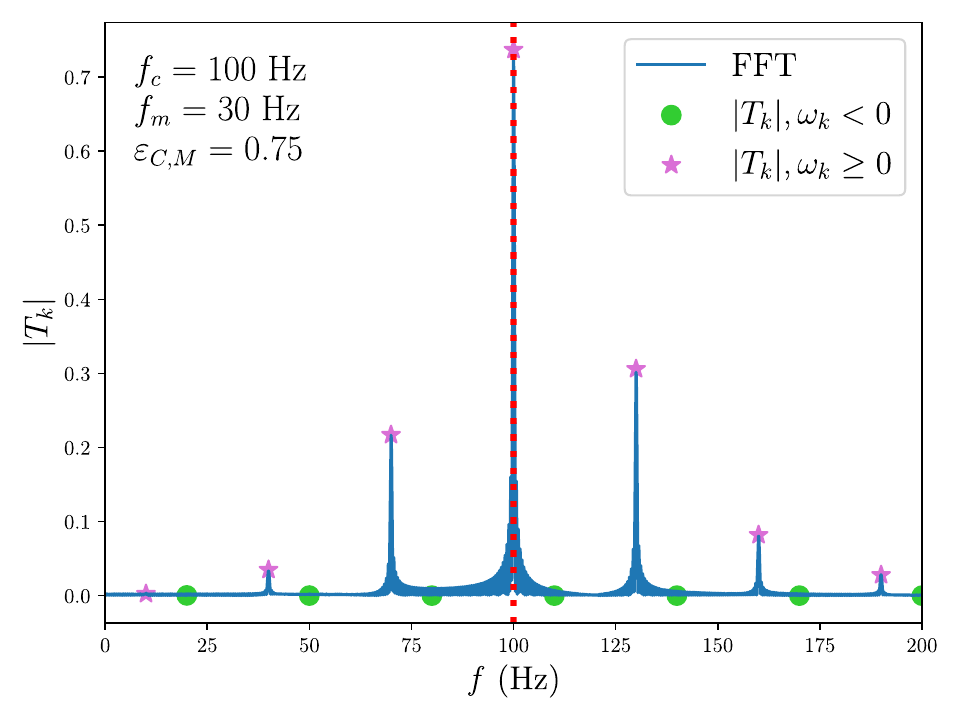} \\
(c) & (d) \\
\hspace{-0.3cm}
\includegraphics[width=0.48\linewidth]{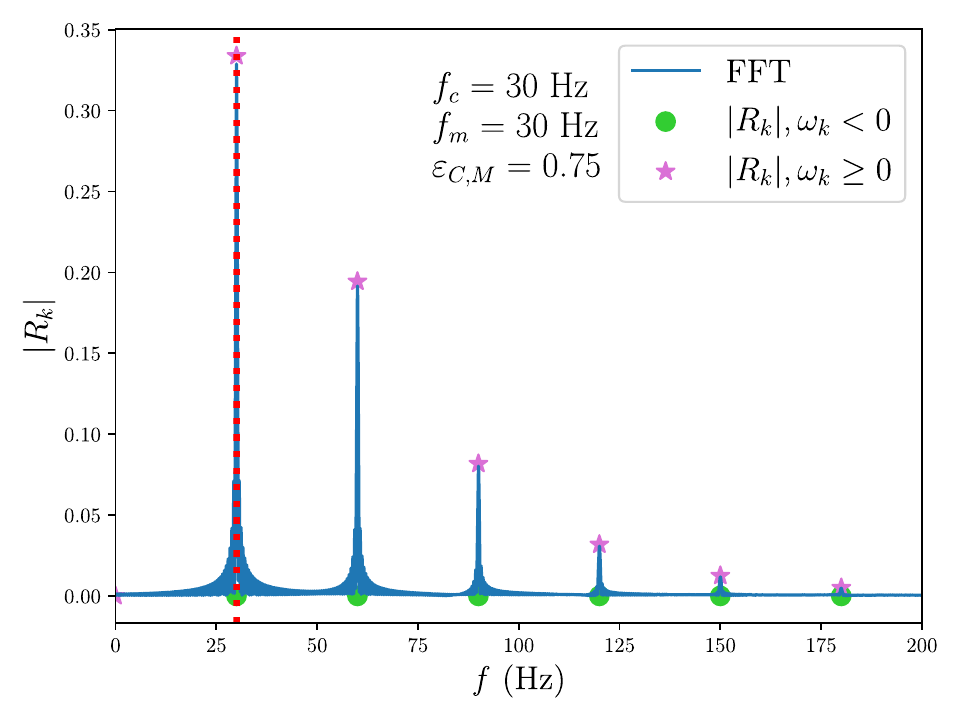} & 
\hspace{-0.3cm}
\includegraphics[width=0.48\linewidth]{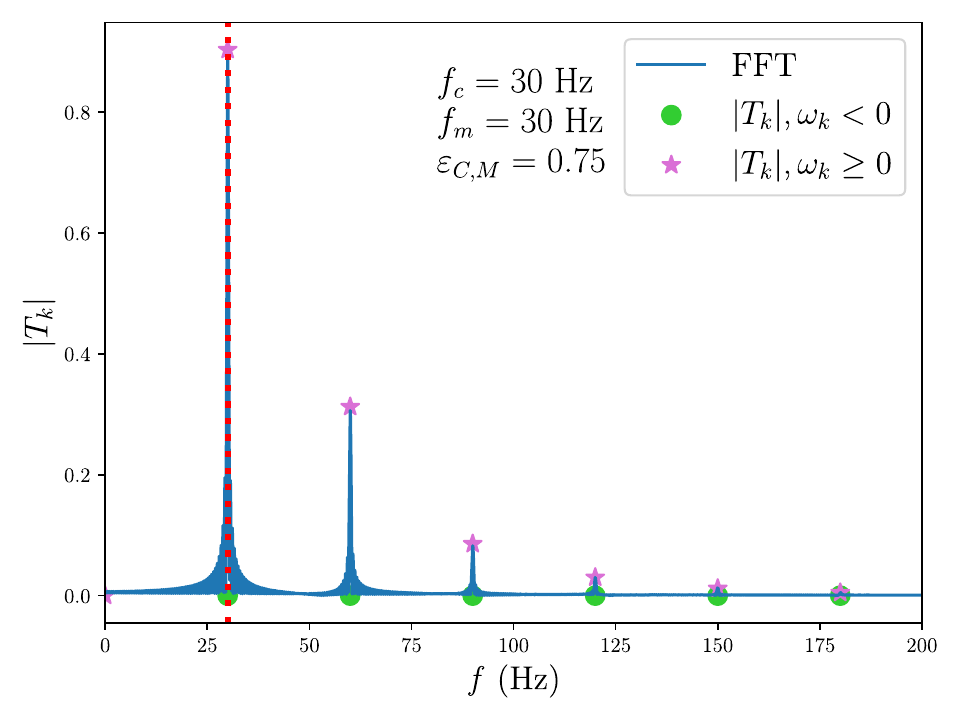} \\
(e) & (f) \\
\hspace{-0.3cm}
\includegraphics[width=0.48\linewidth]{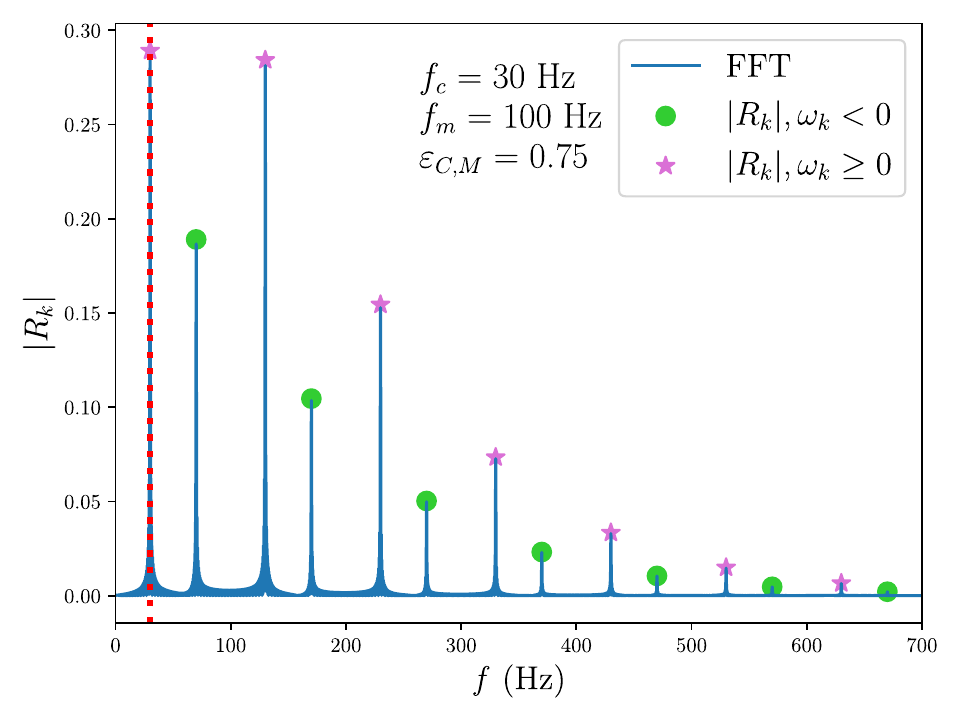} & 
\hspace{-0.3cm}
\includegraphics[width=0.48\linewidth]{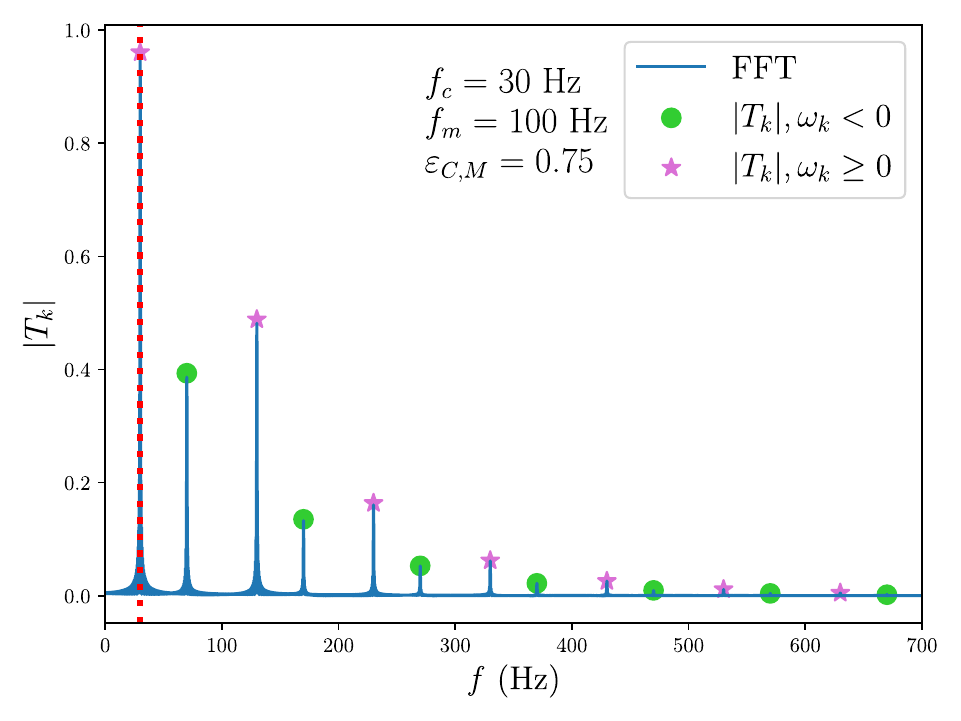}
\end{tabular}
\end{center}
\vspace{-0.5cm}
\caption{\label{FigHarmonics}Generation of harmonics. Coefficients of reflection $R_k$ (left row) and transmission $T_k$ (right row) in \eqref{AnsatzRT}. The vertical dotted line denotes the  frequency $f_c$ of the incident wave. (a-b): $f_c=100$ and $f_m=30$; (c-d): $f_c=f_m=30$; (e-f): $f_c=30$ and $f_m=100$. Blue lines denote the FFT of direct simulations, whereas plain circles and plain stars denote the results of harmonic balance respectively for $\omega_k<0$ and $\omega_k\geq0$.}
\end{figure}

The forcing is a Dirac source point \eqref{Dirac} on the left of the interface $(x_s<x_0$). The velocity field is recorded at a receiver on the right of the interface ($x_r>x_0$), and then a FFT is performed. Three cases are studied considering different frequencies of the source ($f_c=\omega/(2\,\pi)$) and of the modulation ($f_m=\Omega/(2\,\pi)$). Their values are i) $f_c=100$ Hz and $f_m=30$ Hz in the first case, ii) $f_c=f_m=30$ Hz in the second one, and iii) $f_c=30$ Hz and $f_m=100$ Hz in the third one.

Figure \ref{FigHarmonics} compares the Fourier spectra of the reflected and transmitted signals obtained by FFT of direct simulations (blue lines) with the theoretical values obtained by harmonic balance (plain circles and stars). We represent here only the positive part of the frequencies as the spectrum is symmetric. The theoretical coefficient corresponding to negative (resp. positive) values of $\omega_k$ are then plotted in green circles (resp. purple stars). A vertical dotted line is set at the frequency $f_c$. In the three cases, the higher coefficient corresponds to the fundamental mode ($k=0$) and the coefficient values decrease when $|k|$ increase. In each case, an excellent agreement is obtained between results of simulation and of harmonic balance, even for a small number of Fourier modes. The following observations are done:
\begin{itemize}
\item when $f_c>f_m$, the influence of negative frequencies is weak (almost zero) because it corresponds to higher modes. Moreover, the values of the coefficients decrease rapidly. The coefficients of the first harmonics are smaller than the half of the value at the frequency $\omega$;
\item when $f_c=f_m$, then $\omega_{-1}=0$ and the calculations show that $R_{-1}= T_{-1}=0$. In this case, green circles and purple stars are located at the same frequency because $\omega_{-k}=\omega_{k-2}$ and in practice, the coefficients computed for $\omega_k\leq 0$ are strictly zero  and the same observation applies if $f_c$ is a multiple of $f_m$. This is an expected behaviour in the light of Section 4 of \cite{KOUKOURAKI2025103530};
\item when $f_c<f_m$, there is a stronger influence of the modes at negative frequencies. It can be explained by the fact that, as visible on the Figure \ref{FigModulK}, the signal is strongly modified with a high-frequency modulation. The first harmonic for the reflection coefficient is for example almost equal to the fundamental frequency ($\approx0.29$), and there are twelve modes that can be noticed while only five or six were visible for a smaller modulation frequency.
\end{itemize} 


\subsection{Impedance matching}\label{SecSimusImpedance}

The interface is modulated sinusoidally at $f_m=100$ Hz and its parameters are $\mathscr{M}_0=12\,\times 10^3$ kg/m$^2$, $\mathscr{K}_0=941$ MPa/m, so that the properties \eqref{Z_KMT} are verified. Both parameters are modulated at the same frequency and with the same amplitude $\varepsilon_{C,M}=0.75$. Bulk parameters are given in Section \ref{SecSimusSetup}.

\begin{figure}[htbp]
\begin{center}
\begin{tabular}{cc}
(a) & (b) \\
\includegraphics[width=0.48\linewidth]{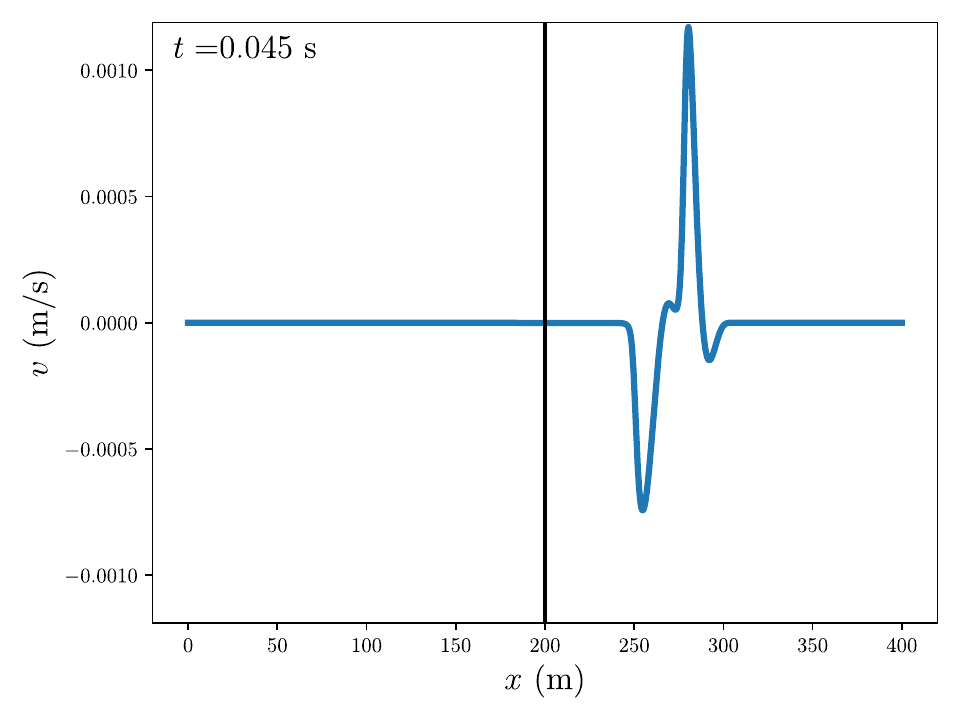} & 
\includegraphics[width=0.48\linewidth]{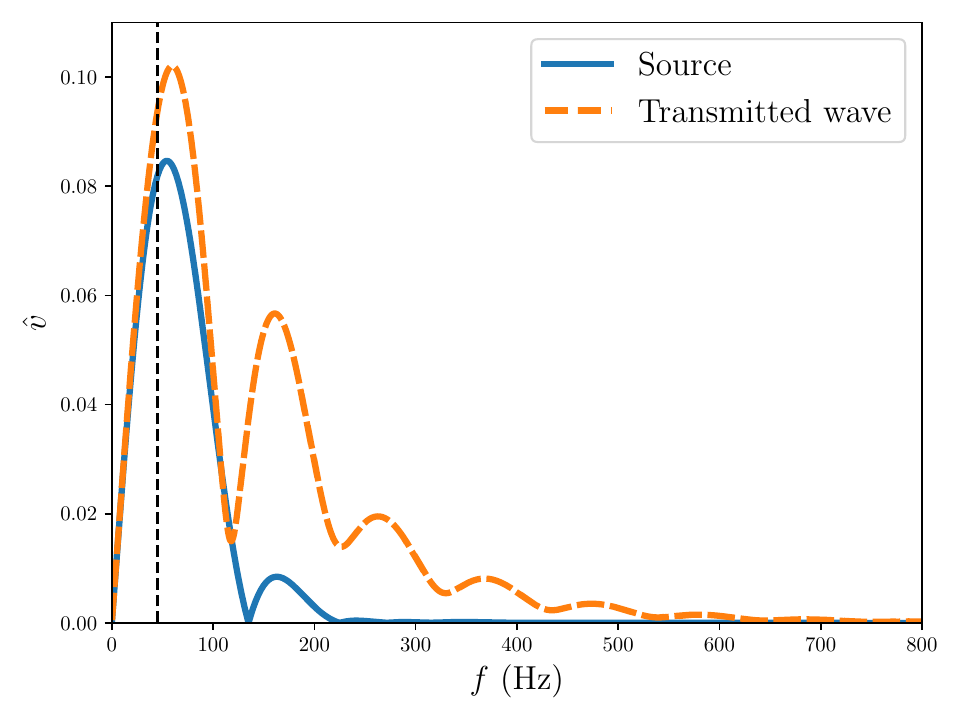}\\ 
\end{tabular}
(c)\\
\includegraphics[width=0.48\linewidth]{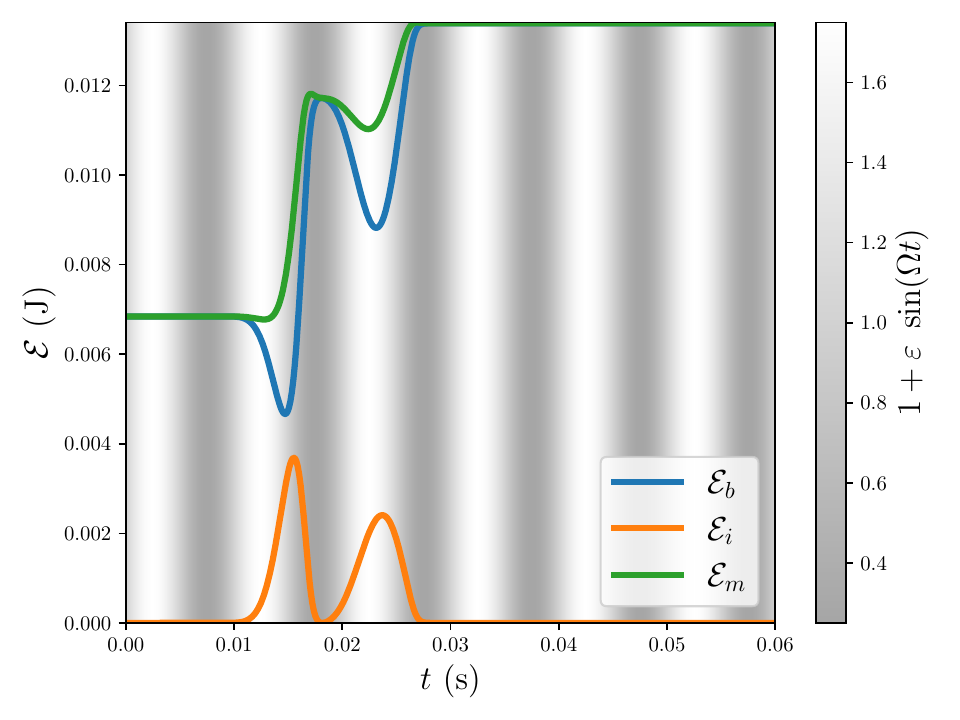}
\end{center}
\vspace{-0.5cm}
\caption{\label{FigNinterfRL}Reflectionless interaction of a pulse centered on $f_c=45$ Hz with a modulated interface due to impedance adaptation. (a): snapshot of the scattered velocity $v$ at $t=0.045$ s. (b): spectra of the incident wave and of the transmitted wave with harmonic generation; the vertical dotted line denotes $f_c$.} (c): time evolution of the energy, where the gray shaded areas denote the evolution of the interface parameters. The modulation function is sinusoidal. 
\end{figure}

Figure \ref{FigNinterfRL} illustrates the behavior of a wave at this modulated interface. On the bottom left side, a snapshot of $v$ is presented. At the left of the interface, no reflected wave propagates because of the impedance matching. The shape of the transmitted signal is similar to that obtained with the first configuration in Figure \ref{FigNinterfAmpli}(a). The spectrum of the transmitted wave is presented in Figure \ref{FigNinterfRL}(b) and compared to the spectrum of the source. The harmonic generation is then visible. As previously, the total amount of energy is modified when the pulse crosses the interface (Figure \ref{FigNinterfRL}(c)). 

Proposition \ref{PropImpedance} about the impedance matching has only been proven for the case of sinusoidal modulation. Nevertheless, Figure \ref{FigNinterfRLNP} shows that the results can be extended to other functions defined by (\ref{FuncModul}) if $\phi_C(t)=\phi_M(t)$. Two cases are tested: a quasi-periodic modulation with $f_m=100$ Hz (a), and a rectangular modulation with $f_m=100$ Hz and $\nu=0.65$ (b). The latter case corresponds to the framework of Remark \ref{RqueImpedance}. In both cases, no reflection is observed.

\begin{figure}[htbp]
\begin{center}
\begin{tabular}{cc} 
(a) & (b) \\
\includegraphics[width=0.48\linewidth]{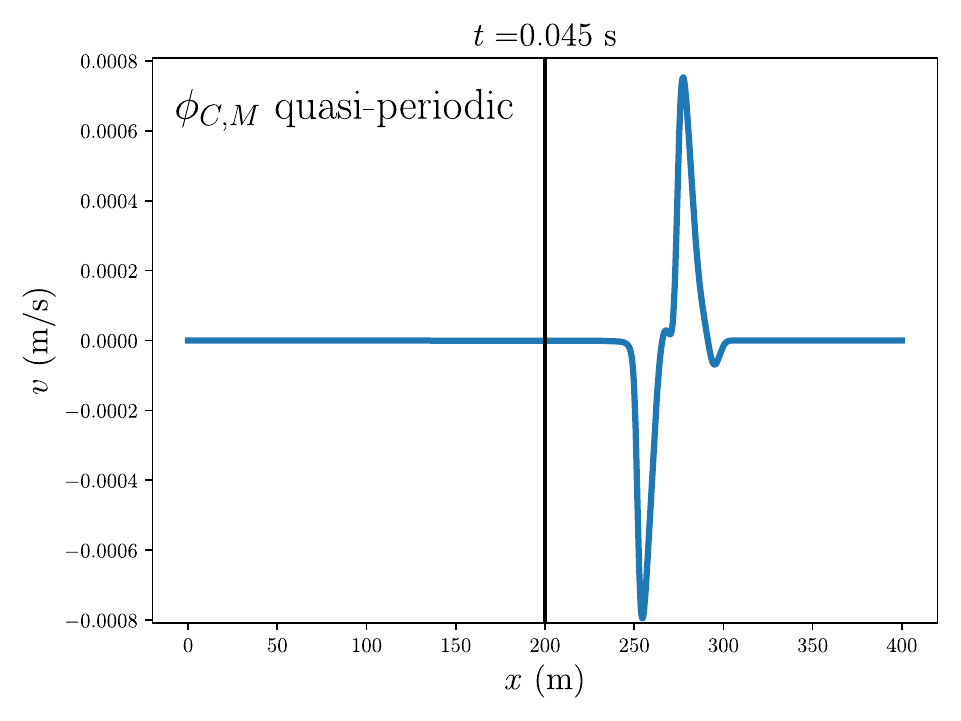} & 
\includegraphics[width=0.48\linewidth]{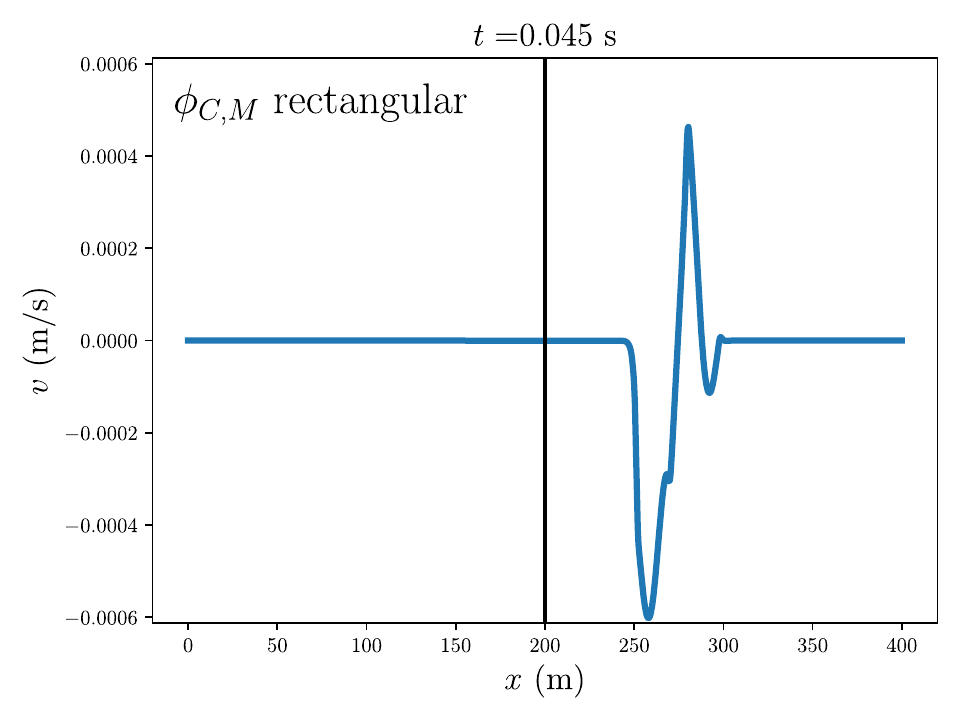}\\ 
\end{tabular}
\end{center}
\vspace{-0.5cm}
\caption{\label{FigNinterfRLNP}Snapshots of reflectionless interaction of a pulse centered on $f_c=45$ Hz with a modulated interface at $t=0.045$ s. Impedance adaptation (a) with a quasi-periodic modulation and (b) with a rectangular modulation.}
\end{figure}


\subsection{Non-reciprocity}\label{SecQSimusNR}

Lastly, we investigate numerically the non-reciprocity induced by time-modulating the parameters of the interface. The configuration is the same as in the previous Section. Figure \ref{FigNonReciprocity} shows a comparison of temporal signals measured when we swap source and sensor (Figure \ref{FigNonReciprocity}(a)) for different values of $f_c$ and $f_m$. The positions of the sources and sensors are $x_s=150$ m and $x_r=225$ m, respectively at a distance $\ell_s$ and $\ell_r$ from the interface. 

\begin{figure}[htpb]
\begin{center}
\begin{tabular}{cc}
(a) & (b) \\
\includegraphics[width=0.3\linewidth]{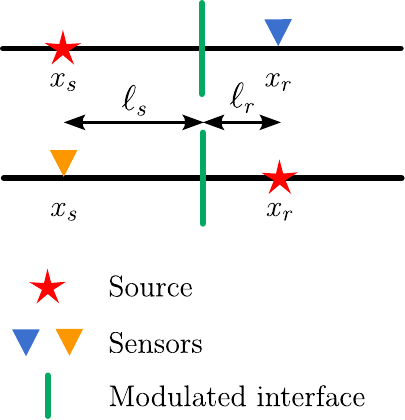} & 
\includegraphics[width=0.48\linewidth]{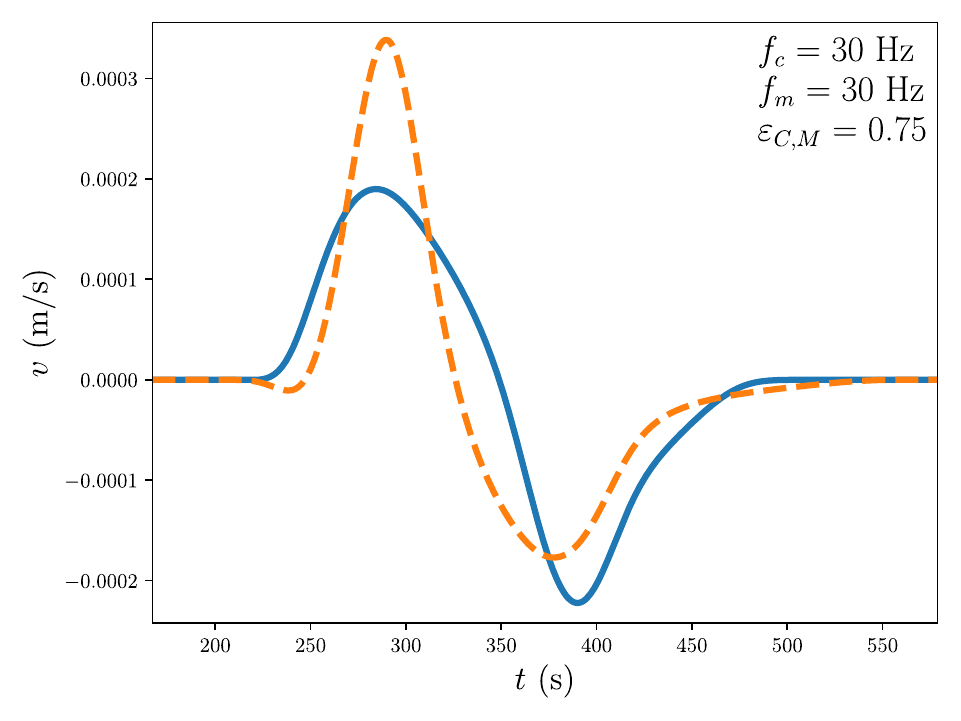}\\ 
(c) & (d) \\
\includegraphics[width=0.48\linewidth]{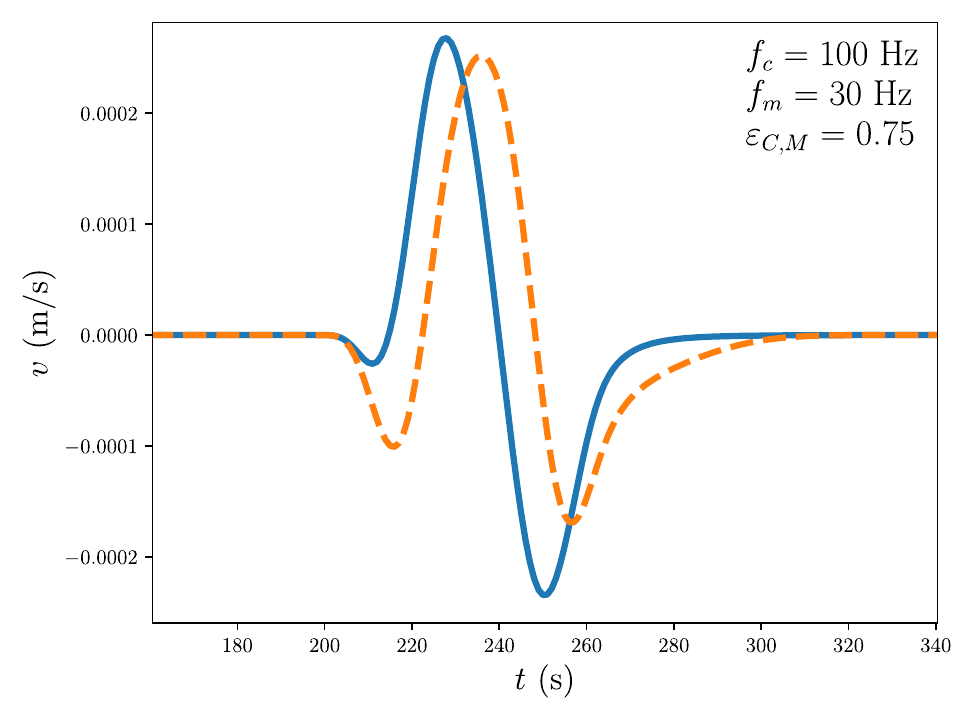} & 
\includegraphics[width=0.48\linewidth]{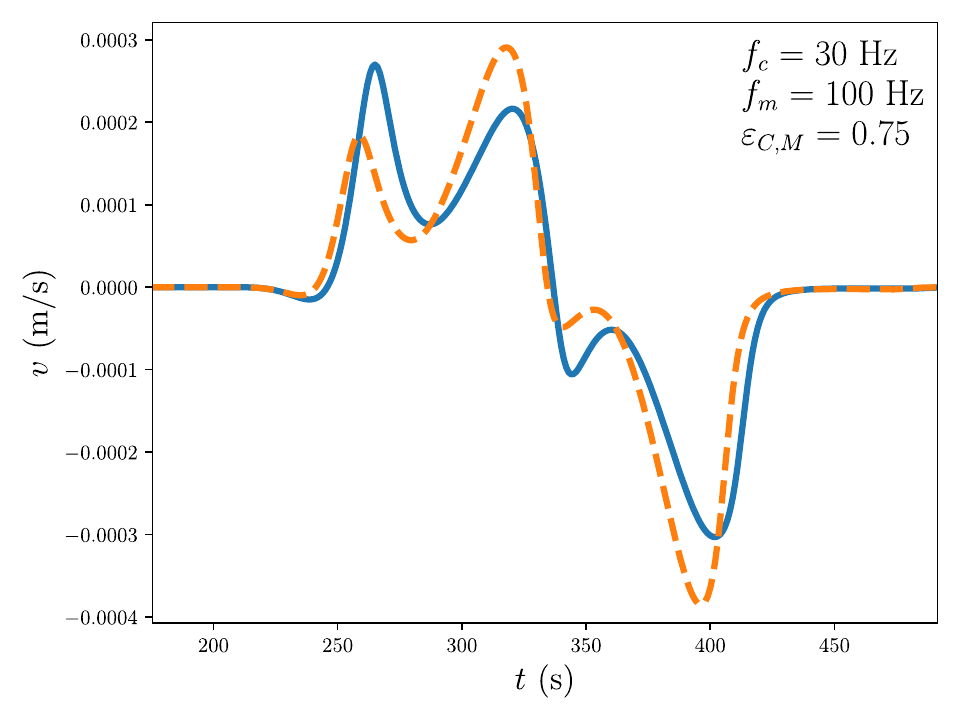}\\ 
\end{tabular}
\end{center}
\vspace{-0.5cm}
\caption{\label{FigNonReciprocity}Illustration of non-reciprocity. (a) Scheme of the numerical experiment where the positions of the sensor and the source is inverted. The signals measured by the sensor are represented for (b)$f_m=f_c=30$ Hz, (c) $f_c=100$ Hz and $f_m=30$ Hz and (d) $f_c=30$ Hz and $f_m=100$ Hz.}
\end{figure}

The transmission of waves through the interface is altered because of the change of state of the interface, when the wave reaches it. It is due to the difference in travel time from the source to the interface. In both cases, the measured signals are different, which illustrates the fact that the reciprocity theorem is no longer verified. 

We propose in this Section to quantify the non-reciprocity by computing the difference between signals. Figure \ref{FigQuantNonReciprocity} presents a measure of the difference between the two signals $v_s$, measured by the sensor at $x=x_r$ due to the source in $x=x_s$, and $v_r$, measured by the sensor at $x=x_s$ due to the source in $x=x_r$ at $t=0.07$ s. This quantity is defined by \begin{equation}
    \vartheta_v=f_c\sqrt{\sum_{i=0}^{N_x}(v_s-v_r)^2}.
\end{equation}
Two types of modulation are studied, a sinusoidal one (\ref{ModulSinus}) and a quasi-periodic one (\ref{FuncModul}) for 100 values of the modulation frequency $f_m$ from $0$ to $400$ Hz. For both the periodic and quasi-periodic modulations, two different frequencies of the pulse are tested: $f_c=30$ Hz represented with the plain blue line and $f_c=50$ Hz with orange dashed line. As the wavelength differs in the two cases, $\vartheta_v$ is normalized using the source frequency.
\begin{figure}[htpb]
\begin{center}
\begin{tabular}{cc}
(a) & (b) \\
\includegraphics[width=0.48\linewidth]{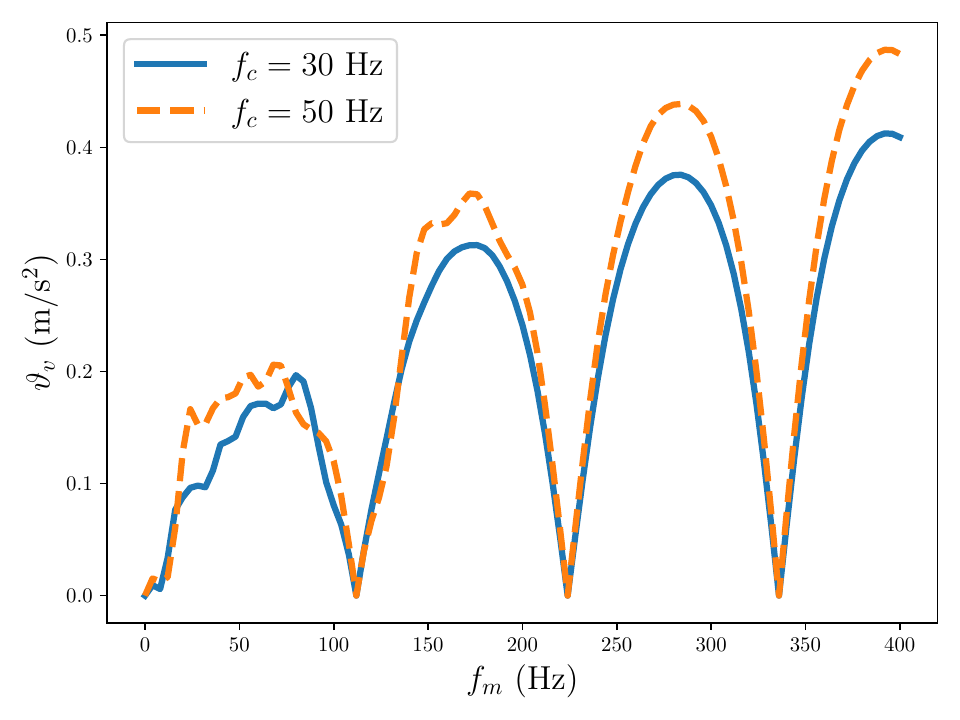} & 
\includegraphics[width=0.48\linewidth]{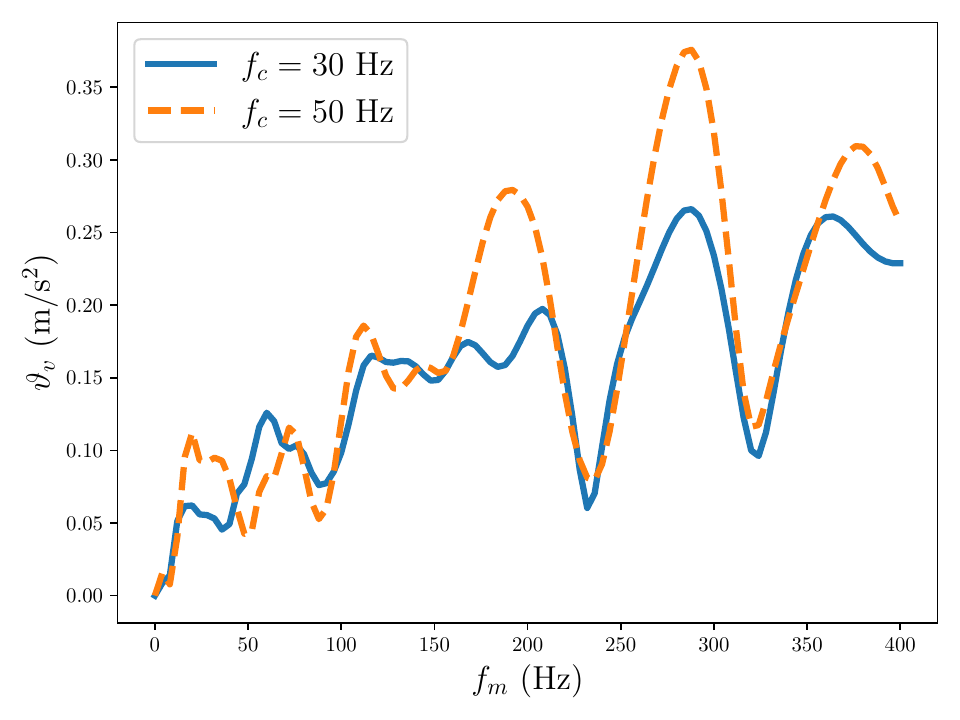}
\end{tabular}
\end{center}
\vspace{-0.5cm}
\caption{\label{FigQuantNonReciprocity}Quantification of non-reciprocity for various modulation frequencies in \eqref{FuncModul}. (a) periodic modulation with two different values of $f_c$: blue line for 30 Hz and orange dashed line for 50 Hz. (b) quasi-periodic modulation with $f_c=30$ Hz and $f_c=50$ Hz.
}
\end{figure}

Measurement with a periodic modulation, as the sinusoidal one treated on Figure \ref{FigQuantNonReciprocity}-(a), shows that some frequencies exist, for a couple source/sensor positions, for which $\vartheta_v=0$. The link between these values is $\ell_s-\ell_r=p \frac{c}{f_m}$, with $p\in\mathbb{Z}$. For these frequencies, the interface is in the same state (due to being phase-shifted by a temporal period) when the incident wave crosses the interface. In particular, the case $p=0$ corresponds to a symmetric case and the measures of non-reciprocity are always zero. It is important to notice that the modulation frequencies with zero non-reciprocity do not depend on the source frequency. One can notice that the non-reciprocity tends to increase with the ratio $\frac{f_m}{f_c}$.

Figure \ref{FigQuantNonReciprocity}-(b) corresponds to a quasi-periodic modulation. Except for the case $\ell_s=\ell_r$, there are no longer values of $f_m\neq 0$ for which $\vartheta_v=0$, so the measures of non-reciprocity are always strictly positive and the signals are always different. Similarly to the case of a periodic modulation in Figure \ref{FigQuantNonReciprocity}-(a), the non-reciprocity tends to increase with the ratio $\frac{f_m}{f_c}$, albeit at a slower rate and with a more complex pattern.


\section{Conclusion}\label{SecConclu}

In this study, we investigated the interaction of waves with an interface whose jump conditions are modulated in time. 
These jump conditions can be experimentally implemented by considering e.g.\ membranes, spring-mass systems or vibrating plates whose thickness is small compared to the wavelength.
Numerical methods for time-domain simulations have been developed. Numerical experiments have demonstrated and validated theoretically predicted properties: amplification of the energy, harmonic generation, impedance matching, and non-reciprocity.

Under the assumptions of Proposition \ref{PropAmpli} (only one sinusoidally modulated interface parameter, the other parameter being zero), we proved that no parametric amplification resonance can occur. That said, we were not able to conclude on this when these assumptions were relaxed. A more complete theoretical study of the energy evolution remains to be carried out, especially in order to better understand the high-frequency regime.

While mentioning resonance, it is worth noting here that we have only considered non-resonant jump conditions. A relaxation of this assumption could be envisaged, and would allow, for example, to study the case of Helmholtz resonators whose physical or geometrical properties vary with time \cite{ammariJoMP2023,mallejacPRA2023}.

Another natural follow-up to this work is to consider a set of modulated interfaces, in phase or not, and periodically distributed in space. In the case where the wavelength is large compared with the spacing between interfaces, asymptotic homogenization can be applied. The work carried out in \cite{bellisJotMaPoS2021} for the case of non-linear but static interfaces is currently generalized to the case of modulated interfaces. This work, currently under finalization, will be the subject of a forthcoming publication. It could also be interesting to study the effective properties induced by a network of time-modulated resonators, as considered in \cite{ammari_effective_2024,tachet2024effective}.

\section*{Conflicts of interest}

The authors declare no competing financial interest.

\section*{Dedication}

The manuscript was written through contributions of all authors. All
authors have given approval to the final version of the manuscript.

\section*{Acknowledgments}
M. D. was funded as a post-doctoral researcher by the Institut M\'ecanique et Ing\'enierie (Marseille, France). S. G. was funded by UK Research and Innvovation (UKRI) under the UK government's Horizon Europe funding guarantee (grant number 10033143). We also would like to thank the reviewers for their useful comments, in particular for the suggestion to study the evolution of energy. Lastly, our thanks go to Vincent Pagneux for the valuable discussions that contributed to this work.



\appendix

\section{Scattering coefficients in the static case}\label{AppRT}

In the case of static jump conditions ($\mathscr{C}(t)=\mathscr{C}_0$,  $\mathscr{M}(t)=\mathscr{M}_0$, $\mathscr{Q}_{C}(t)=\mathscr{Q}_{C_0}$ and $\mathscr{Q}_{M}(t)=\mathscr{Q}_{M_0}$), the scattering coefficients can be calculated analytically. The physical parameters are assumed to be piecewise constant:
$$
(\rho,\,c)=\left\{
\begin{array}{l}
\ds (\rho_0,\,c_0)\quad \mbox{in  }\Omega_0\quad(x<x_0),\\ [8pt]
\ds (\rho_1,\,c_1)\quad \mbox{in  }\Omega_1\quad(x>x_0).
\end{array}
\right.
$$ Upon introducing the following quantities
\begin{equation}
\begin{array}{c}
\ds Z_0=\rho_0\,c_0,\quad Z_1=\rho_1\,c_1,\quad Z_2=\rho_0\,c_1,\\ [8pt]
\ds Y_0=\mathscr{C}_0\,Z_0\,Z_1-\mathscr{M}_0,\quad Y_1=\mathscr{C}_0\,Z_0\,Z_1+\mathscr{M}_0,\quad \omega_\sharp^2=\frac{\mathscr{C}_0\mathscr{M}_0}{4},
\end{array}
\label{NotaImpedance}
\end{equation}
the reflection coefficient $R$ and the transmission coefficient $T$ at frequency $\omega$ are given by
\begin{equation}
\begin{array}{l}
\ds R(\omega)=\frac{\ds (Z_1-Z_0)\,\left(1-\left(\frac{\omega}{\omega_\sharp}\right)^2+\frac{\mathscr{Q}_{C_0}\,\mathscr{Q}_{M_0}}{4}\right)-\mathscr{Q}_{C_0}\,Z_0\,Z_1+\mathscr{Q}_{M_0}-i\,\omega\left(Y_0+\frac{1}{4}\left(\mathscr{Q}_{C_0}\,\mathscr{M}_0+\mathscr{Q}_{M_0}\,\mathscr{C}_0\right)\,(Z_1-Z_0)\right)}{\ds (Z_1+Z_0)\,\left(1-\left(\frac{\omega}{\omega_\sharp}\right)^2+\frac{\mathscr{Q}_{C_0}\,\mathscr{Q}_M}{4}\right)+\mathscr{Q}_{C_0}\,Z_0\,Z_1+\mathscr{Q}_{M_0}+i\,\omega\left(Y_1+\frac{1}{4}\left(\mathscr{Q}_{C_0}\,\mathscr{M}_0+\mathscr{Q}_{M_0}\,\mathscr{C}_0\right)\,(Z_1+Z_0)\right)},\\ [14pt]
\ds T(\omega)=\frac{\ds 2\,Z_2\,\left(1+\left(\frac{\omega}{\omega_\sharp}\right)^2-\frac{\mathscr{Q}_{C_0}\,\mathscr{Q}_{M_0}}{4}-i\,\omega\frac{1}{4}\left(\mathscr{Q}_{C_0}\,\mathscr{M}_0+\mathscr{Q}_{M_0}\,\mathscr{C}_0\right)\right)}{\ds (Z_1+Z_0)\,\left(1-\left(\frac{\omega}{\omega_\sharp}\right)^2+\frac{\mathscr{Q}_{C_0}\,\mathscr{Q}_{M_0}}{4}\right)+\mathscr{Q}_{C_0}\,Z_0\,Z_1+\mathscr{Q}_{M_0}+i\,\omega\left(Y_1+\frac{1}{4}\left(\mathscr{Q}_{C_0}\,\mathscr{M}_0+\mathscr{Q}_{M_0}\,\mathscr{C}_0\right)\,(Z_1+Z_0)\right)}.
\end{array}
\label{RT}
\end{equation}

In the particular case where the density and the Young's modulus are constant, the impedance $Z=\rho\,c=\sqrt{\rho\,E}$ is constant. Additionally, let us assume that the two following conditions are satisfied
\begin{equation}
\mathscr{M}_0=Z^2\,\mathscr{C}_0,\hspace{1cm}
\mathscr{Q}_{M_0}=Z^2\,\mathscr{Q}_{C_0}.
\end{equation}
Then $R(\omega)=0$: no reflected wave is generated at a static imperfect interface, at any frequency. One recovers the impedance matching condition given in Proposition \ref{PropImpedance}. Moreover, these coefficients can also be obtained by calculating $R_0$ and $T_0$ using (\ref{RkTk}) when the interface parameters are static.


\section{Analytical solution of scattered fields}\label{AppAnalytic}

Here we determine the exact solution of \eqref{BVP} with initial data \eqref{Cauchy}. As in Appendix \ref{AppRT}, the physical parameters are assumed to be piecewise constant.
We consider the case where only $\mathscr{C}(t)$ and $\mathscr{Q}_C(t)$ are and the interface stiffness is modulated, whereas the interface inertia and the corresponding dissipation term is taken to be zero ($\mathscr{M}(t)=0$ and $\mathscr{Q}_M(t)=0$). The reciprocal case $\mathscr{M}(t)\neq 0$, $\mathscr{Q}_C(t)=0$ and $\mathscr{C}(t)=0$ can be treated in a similar manner. In both cases, the calculation relies on the method of characteristics, illustrated on Figure \ref{fig:Riemann}. The proof follows the lines of \cite{lombardJoCaAM2007}.

\begin{figure}[htbp]
    \centering
    \includegraphics[width=0.6\linewidth]{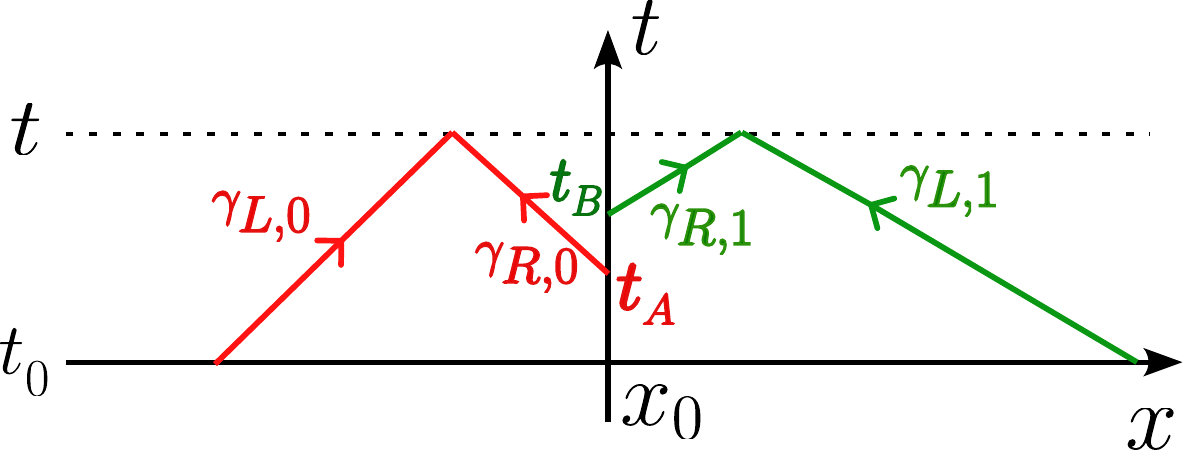}
    \caption{Sketch of the method of characteristics for computing semi-analytical solution using Riemann invariants.}
    \label{fig:Riemann}
\end{figure}
To express ${\bf U}(x,\,t)$ in terms of limit-values of the fields at $x_0$, we use the {\it Riemann invariants} $J^{R,L}$ that are constant along the {\it characteristics} $\gamma_{R,L}$. The invariants for linear Partial Differential Equations with constant coefficients satisfy
\begin{equation}
\left\{
\begin{array}{l}
\ds \gamma_R:\,\frac{\ds dx}{\ds dt}=+c\,\Rightarrow\,\left.\frac{\ts dJ^R}{\ts dt}\right|_{\gamma_R}=0,\quad \mbox{ with }J^R(x,\,t)=\frac{\ds 1}{\ds 2}\left(v-\frac{\ds 1}{\ds \rho\,c}\sigma\right)(x,\,t),\\
[10pt]
\ds \gamma_L:\,\frac{\ds dx}{\ds dt}=-c\,\Rightarrow\,\left.\frac{\ts dJ^L}{\ts dt}\right|_{\gamma_L}=0,\quad \mbox{ with }J^L(x,\,t)=\frac{\ds 1}{\ds 2}\left(v+\frac{\ts 1}{\ts \rho\,c}\sigma\right)(x,\,t).
\end{array}
\right.
\label{Jrl}
\end{equation}
Using the continuity of stress \eqref{JC1interf-M} and the invariant \eqref{Jrl}, along with the initial data condition of compact support in $\Omega_0$, one obtains for $t\geq t_0$
\begin{equation}
\begin{array}{l}
\sigma^\pm(t)=-\rho_1\,c_1\,v^+(t),\\[8pt]
\ds
v^-(t)=-\frac{\ds \rho_1\,c_1}{\ts \rho_0\,c_0}\,v^+(t)+2\,J_0^R\left(x_0-c_0\,t,\,0\right),
\end{array}
\label{SUpm} 
\end{equation}
where the subscript $i$ on $J^{R,L}_i$ refers to $\Omega_i$. The solution ${\bf U}(x,\,t)$ can be expressed in terms of $v^+(s)$, with $t_0\leq s \leq t$, and of the initial values of the Riemann invariants. Introducing the travel times
$$
t_A =t-\frac{\textstyle 1}{\textstyle c_0}(x_0-x), \qquad
t_B=t-\frac{\textstyle 1}{\textstyle c_1}(x-x_0),
$$
the solution ${\bf U}(x,\,t)$ is given in $\Omega_0$ by
\begin{equation}
\begin{array}{l}
{\bf U}(x,\,t)=
\left(
\begin{array}{cc}
1           &  1\\
[4pt]
-\rho_0\,c_0 & \rho_0\,c_0
\end{array}
\right)\,
\left(
\begin{array}{c}
J_0^R\left(x-c_0\,t,\,0\right)\\
[4pt]
\Delta_A(x,\,t)
\end{array}
\right),\\
\\
\ds \mbox{with } \Delta_A(x,\,t)=
\left\{
\begin{array}{l}
\ds
-\frac{\ts \rho_1\,c_1}{\textstyle \rho_0\,c_0}\,v^+(t_A)+J_0^R(x_0-c_0\,t_A,\,0)
\,\mbox{ if }\,t_A \geq 0,\\
[8pt]
J_0^L(x+c_0\,t,\,0) \, \mbox{ otherwise},
\end{array}
\right.
\end{array}
\end{equation}
and in $\Omega_1$ by
\begin{equation}
\begin{array}{l}
{\bf U}(x,\,t)=
\left(
\begin{array}{c}
1\\
[4pt]
-\rho_1\,c_1
\end{array}
\right)
\Delta_B(x,\,t),\\
\\
\ds \mbox{with } \Delta_B(x,\,t)=
\left\{
\begin{array}{l}
\ds
v^+(t_B)\, \mbox{ if }\,t_B \geq t_0,\\
[4pt]
0 \, \mbox{ otherwise}.
\end{array}
\right.
\end{array}
\label{U(x,t)}
\end{equation}
To complete the analytical solution, it remains to determine $v^+(t)$. For this purpose, the jump of displacement \eqref{JC1interf-K} is differentiated, leading to
\begin{equation}
v^+(t)-v^-(t)=\left(\mathscr{C}(t)\right)'\sigma^+(t)+\mathscr{C}(t)\,\partial_t\sigma^+(t)+\mathscr{Q}_C(t)\,\sigma^+(t).
\end{equation}
From \eqref{SUpm}, it follows 
\begin{equation}
\left(1+\frac{\ts \rho_1\,c_1}{\rho_0\,c_0}+\rho_1c_1\left(\mathscr{C}'(t)+\mathscr{Q}_C(t)\right)\right)\,v^+(t)-2\,J_0^R\left(x_0-c_0\,t,\,0\right)=-\rho_1\,c_1\,\mathscr{C}(t)\,\partial_tv^+(t).
\label{eq:eqdify}
\end{equation}
Setting
\begin{equation}
y(t)=v^+(t),\qquad h(t)=2\,J_0^R\left(x_0-c_0\,t,\,0\right),
\end{equation}
one obtains the non-autonomous Ordinary Differential Equation
if $\mathscr{C}(t)\neq0$:
\begin{equation}
y'(t)=\frac{\ts 1}{\ts \rho_1\,c_1\,\mathscr{C}(t)}\left( h(t)-\left(1+\frac{\ts \rho_1\,c_1}{\rho_0\,c_0}+\rho_1\,c_1\left(\mathscr{C}'(t)+\mathscr{Q}_C(t)\right)\right)\,y(t)\right),
\label{ODE}
\end{equation}
which is integrated by the usual Runge-Kutta 4 method.
If $\mathscr{C}(t)=0$ and $\mathscr{Q}_C\neq 0$, (\ref{eq:eqdify}) gives:
\begin{equation}
    v^+(t)=\frac{2\,J_0^R\left(x_0-c_0\,t,\,0\right)}{\ds 1+\frac{ \rho_1\,c_1}{\rho_0\,c_0}+\rho_1c_1\mathscr{Q}_C(t)}.
\end{equation}
This solution lies within the bounds defined by the solutions associated with the extreme (minimum and maximum) values of $\mathscr{Q}_C(t)$.

The reciprocal case $\mathscr{M}(t)\neq 0$, $\mathscr{Q}_M\neq 0$ and $\mathscr{C}(t)=0$, $\mathscr{Q}_C(t)=0$ can be treated using the following expressions:
\begin{equation}
\begin{array}{l}
 \Delta_A(x,\,t)=
\left\{
\begin{array}{l}
\ds
-\frac{\ts 1}{\textstyle \rho_1\,c_1}\,\sigma^+(t_A)-J_0^R(x_0-c_0\,t_A,\,0)
\,\mbox{ if }\,t_A \geq t_0,\\
[8pt]
J_0^L(x+c_0\,t,\,0) \, \mbox{ otherwise},
\end{array}
\right.
\end{array}
\end{equation}
\begin{equation}
\begin{array}{l}
 \Delta_B(x,\,t)=
\left\{
\begin{array}{l}
\ds
-\frac{\sigma^+(t_B)}{\rho_1\,c_1}\, \mbox{ if }\,t_B \geq t_0,\\
[4pt]
0 \, \mbox{ otherwise}.
\end{array}
\right.
\end{array}
\label{V(x,t)}
\end{equation}
and by solving 
\begin{equation}
z'(t)=-\frac{\ts \rho_1\,c_1}{\ts \mathscr{M}(t)}\left( g(t)+\left(1+\frac{\ts \rho_0\,c_0}{\rho_1\,c_1}+\frac{\ts \left(\mathscr{M}'(t) + \mathscr{Q}_M(t)\right)}{\ts \rho_1\,c_1}\right)\,z(t)\right),
\label{ODEM}
\end{equation}
where
\begin{equation}
z(t)=\sigma^+(t),\qquad g(t)=2\,\rho_0c_0J_0^R\left(x_0-c_0\,t,\,0\right).
\end{equation}
As previously, the particular case where $\mathscr{M}(t)=0$ and $\mathscr{Q}_M(t)\neq 0$ leads to
\begin{equation}
    \sigma^+(t)=-\frac{2\,\rho_0c_0J_0^R\left(x_0-c_0\,t,\,0\right)}{\ds 1+\frac{ \rho_0\,c_0}{\rho_1\,c_1}+\frac{ \mathscr{Q}_M(t)}{ \rho_1\,c_1}},
\end{equation}
which is bounded by the static solutions obtained with the extreme values of $\mathscr{Q}_M(t)$.


\section{Bounded amplification}\label{AppAmpli}

Here we prove Proposition \ref{PropAmpli}. The assumptions are:
\begin{itemize}
    \item homogeneous medium, with $\rho_0\,c_0=\rho_1\,c_1=Z$;
    \item source point at $x_s<x_0$, with a bounded source function $S$;
    \item $T$-periodic modulation of the compliance and of the corresponding dissipation parameter \eqref{KM-T}, with $\phi_X(t+T)=\phi_X(t)=\varphi_X(\Omega t)$, $T=2\,\pi/\Omega$, and $X=C,Q_C$;
    \item continuity of stress ($\mathscr{M}_0=0$ and $\mathscr{Q}_M=0$). Similar conclusions hold with a modulated inertia and continuity of velocity ($\mathscr{C}_0=0$ and $\mathscr{Q}_C=0$).
\end{itemize}

The mean of a function $w(t)$ over $T$ is denoted by
\begin{equation}
\overline{w}=\frac{1}{T}\int_0^Tw(t)\,dt.
\end{equation}
We need the following Lemmas.

\begin{Lemma}
Consider the Ordinary Differential Equation
\begin{equation}
    y'(t)=\alpha(t)\,y+\beta(t),
    \label{ODE_Duhamel}
\end{equation}
where $\alpha(t)$ is a $T$-periodic function, and $\beta(t)$ is bounded. If $\overline{\alpha}\leq 0$, then $y(t)$ is bounded.
\label{LemmaDuhamel}
\end{Lemma}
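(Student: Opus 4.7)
The plan is to solve \eqref{ODE_Duhamel} explicitly by variation of parameters and then exploit a Floquet-type decomposition of the antiderivative of $\alpha$. First I would write the solution as
$$y(t) = y(0)\,e^{A(t)} + \int_0^t e^{A(t) - A(s)}\,\beta(s)\,ds, \qquad A(t) := \int_0^t \alpha(\tau)\,d\tau,$$
which is the standard Duhamel formula.

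Next I would decompose $A$ into its secular and oscillating parts. Since $\alpha$ is $T$-periodic, $A(t+T) - A(t) = T\,\overline{\alpha}$, so the function $P(t) := A(t) - \overline{\alpha}\,t$ is itself $T$-periodic, continuous, and therefore uniformly bounded by some $\|P\|_\infty$. This Floquet-style decomposition $A(t) = \overline{\alpha}\,t + P(t)$ is the key structural observation.

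From it, the two terms of $y(t)$ are controlled separately. The homogeneous contribution satisfies $|y(0)\,e^{A(t)}| \leq |y(0)|\,e^{\|P\|_\infty}\,e^{\overline{\alpha}\,t}$, which remains bounded because $\overline{\alpha} \leq 0$. For the forced contribution, one has the pointwise estimate $|e^{A(t)-A(s)}| \leq e^{2\|P\|_\infty}\,e^{\overline{\alpha}(t-s)}$, hence
$$\left| \int_0^t e^{A(t)-A(s)}\,\beta(s)\,ds \right| \leq e^{2\|P\|_\infty}\,\|\beta\|_\infty \int_0^t e^{\overline{\alpha}(t-s)}\,ds,$$
and when $\overline{\alpha} < 0$ this last integral is bounded uniformly in $t$ by $1/|\overline{\alpha}|$, so $y$ stays bounded.

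The main obstacle is the borderline case $\overline{\alpha} = 0$: the exponential damping disappears and a generic constant $\beta$ would produce linear growth, so either strict negativity is needed, or an additional zero-mean cancellation of $\beta$ against the integrating factor must be invoked. In the intended application to Proposition \ref{PropAmpli}, the $\alpha$ appearing in \eqref{ODE} contains a contribution $-\overline{2/(Z\,\mathscr{C})}$ that is strictly negative by positivity of $\mathscr{C}$, together with a purely oscillating piece $-(\ln \mathscr{C})'$ of zero mean and a non-positive contribution $-\overline{\mathscr{Q}_C/\mathscr{C}}$; hence $\overline{\alpha} < 0$ strictly and the borderline case can safely be excluded.
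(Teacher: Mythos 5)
Your proof follows the same basic route as the paper's: both start from the Duhamel/integrating-factor formula \eqref{Duhamel}. The difference is that the paper stops there, asserting only that ``$e^{A(t)}$ does not grow exponentially and $y(t)$ is bounded,'' whereas you make this precise via the Floquet decomposition $A(t)=\overline{\alpha}\,t+P(t)$ with $P$ periodic and bounded, and then estimate the homogeneous and forced contributions separately. This extra care pays off: it exposes the fact that the paper's Lemma~\ref{LemmaDuhamel}, \emph{as stated with the non-strict inequality} $\overline{\alpha}\leq 0$, is actually false. Taking $\alpha\equiv 0$ and $\beta\equiv 1$ satisfies all the hypotheses, yet $y(t)=y_0+t$ is unbounded; boundedness of $e^{A(t)}$ alone does not control the integral term $\int_0^t e^{A(t)-A(s)}\beta(s)\,ds$, which is exactly the step the paper's one-line justification glosses over. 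Your repair --- requiring $\overline{\alpha}<0$ strictly, so that $\int_0^t e^{\overline{\alpha}(t-s)}\,ds\leq 1/|\overline{\alpha}|$ --- is the right one, and your check that the application in Proposition~\ref{PropAmpli} only ever invokes the strict case is consistent with the paper, which computes $\overline{a}<0$ explicitly. Incidentally, your way of seeing that strict negativity, by splitting $a(t)$ in \eqref{ABt} into $-2/(Z\mathscr{C})$ (strictly negative mean), $-(\ln\mathscr{C})'$ (zero mean by periodicity), and $-\mathscr{Q}_C/\mathscr{C}$ (non-positive mean), is both simpler and more general than the paper's route, which evaluates the trigonometric integrals explicitly for a sinusoidal modulation and needs the auxiliary positivity Lemma~\ref{LemmaG}; your argument works for an arbitrary periodic modulation. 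In short: same skeleton, but your version is rigorous where the paper's is not, and the lemma's hypothesis should be tightened to $\overline{\alpha}<0$ (or a zero-mean condition imposed on $e^{-A}\beta$ in the borderline case).
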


\begin{proof}
The integrating factor method yields
\begin{equation}
y(t)=e^{A(t)}\left(y_0+\int_0^t e^{-A(s)}\,\beta(s)\,ds\right),\qquad \mbox{with } A(t)=\int_0^t\alpha(s)\,ds.
    \label{Duhamel}
\end{equation}
If $\overline{\alpha}\leq 0$, then $e^{A(t)}$ does not grow exponentially and $y(t)$ is bounded.
\end{proof}

\begin{Lemma}
Let the function
\begin{equation}
\Ng(\Na,\Nb)=1+\frac{\Na}{\Nb}\left(\sqrt{1-\Nb^2}-1\right)
\label{FuncG}
\end{equation}
defined on the domain ${\mathcal{D}}=\{(\Na,\Nb)\in\mathbb{R}^2/\,|\Na|<1,\,|\Nb|<1,\,\Nb\neq 0\}$. Then $\Ng(\Na,\Nb)>0$ for all $(\Na,\Nb)\in {\mathcal{D}}$.
\label{LemmaG}
\end{Lemma}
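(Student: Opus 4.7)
The plan is to rationalize the factor $\sqrt{1-\Nb^2}-1$ so as to remove the apparent singularity at $\Nb=0$ and reduce the claim to an elementary algebraic inequality.

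First, I would multiply $\sqrt{1-\Nb^2}-1$ by its conjugate to obtain $\sqrt{1-\Nb^2}-1 = -\Nb^2/(1+\sqrt{1-\Nb^2})$. Substituting this into \eqref{FuncG} gives the equivalent representation
\begin{equation*}
\Ng(\Na,\Nb) \;=\; 1 \,-\, \frac{\Na\,\Nb}{1+\sqrt{1-\Nb^2}},
\end{equation*}
valid for all $(\Na,\Nb)\in\mathcal{D}$. Note that the suspicious factor $1/\Nb$ has been absorbed and the resulting expression extends continuously to $\Nb=0$ (with value $1$), which is a good sanity check.

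Next, since $|\Nb|<1$, the quantity $1+\sqrt{1-\Nb^2}$ belongs to $(1,2]$ and is in particular strictly positive. Therefore $\Ng(\Na,\Nb)>0$ is equivalent to the inequality $\Na\,\Nb < 1+\sqrt{1-\Nb^2}$. Using $|\Na|<1$ and $|\Nb|<1$, I would write $\Na\,\Nb\leq |\Na|\,|\Nb| < 1 \leq 1+\sqrt{1-\Nb^2}$, which concludes the argument.

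There is no real obstacle here: the only conceptual point is to realise that the apparent divergence of $\Na/\Nb$ as $\Nb\to 0$ is exactly compensated by the vanishing of $\sqrt{1-\Nb^2}-1$, a cancellation made transparent by the rationalization step. Once the formula is rewritten in the form above, the strict inequalities $|\Na|<1$ and $|\Nb|<1$ built into the domain $\mathcal{D}$ make the rest automatic.
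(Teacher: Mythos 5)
Your proof is correct, and it reaches the conclusion by a more elementary route than the paper. Both arguments rest on the same underlying observation --- that the correction term added to $1$ has modulus strictly less than $1$ --- but they establish it differently. The paper sets $\Nh(\Na,\Nb)=-\frac{\Na}{\Nb}\left(1-\sqrt{1-\Nb^2}\right)$, bounds $|\Nh|$ by $H(\Nb)=\frac{1}{|\Nb|}\left(1-\sqrt{1-\Nb^2}\right)$, and then proves $0<H(\Nb)<1$ by computing $H'$ and studying its monotonicity and limits at $\Nb\to 0$ and $\Nb\to\pm 1$. Your rationalization shows that this same $H$ equals $\frac{|\Nb|}{1+\sqrt{1-\Nb^2}}$, from which the bound $H<1$ is immediate without any calculus; equivalently, your identity
\begin{equation*}
\Ng(\Na,\Nb)=1-\frac{\Na\,\Nb}{1+\sqrt{1-\Nb^2}}
\end{equation*}
makes the positivity follow from $|\Na\,\Nb|<1\leq 1+\sqrt{1-\Nb^2}$ in one line. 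What your approach buys is brevity and the bonus observation that $\Ng$ extends continuously (with value $1$) to $\Nb=0$; what the paper's approach buys is an explicit sharp envelope $H(\Nb)$ for the correction term, though that extra information is not needed for the lemma. No gaps.
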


\begin{proof}
One introduces
$$
\Nh(\Na,\Nb)=-\frac{\Na}{\Nb}\left(1-\sqrt{1-\Nb^2}\right).
$$
For $(\Na,\Nb)\in{\mathcal{D}}$, one has
$$
1-\sqrt{1-\Nb^2}\in(0,1),\quad \left|\frac{\Na}{\Nb}\right|<\frac{1}{|\Nb|},
$$
so that
$$
|\Nh(\Na,\Nb)|<\frac{1}{|\Nb|}\left(1-\sqrt{1-\Nb^2} \right):= H(\Nb).
$$
The real function $H$ satisfies
$$
H(\Nb)=\frac{|\Nb|}{2}+o(\Nb)\mathop{\rightarrow}\limits_{\Nb\rightarrow 0}0,\quad H(\Nb)\mathop{\rightarrow}\limits_{\Nb\rightarrow \pm 1}1.
$$
For $\Nb>0$, its derivative writes
$$
H'(\Nb)=\frac{1-\sqrt{1-\Nb^2}}{\Nb^2 \sqrt{1-\Nb^2}}>0,
$$
so that $H$ is increasing from 0 (at $\Nb=0$) to 1 (at $\Nb=1$). Lastly, $H$ is a continuous even function. For all $\Nb\in(-1,1)$, one deduces that
$$
0<H(\Nb)<1.
$$

It follows $|\Nh(\Na,\Nb)|<1$, and hence
$$
\Ng(\Na,\Nb)=1+\Nh(\Na,\Nb)>1-1=0,
$$
which concludes the proof.
\end{proof}

Following the same lines as in Appendix \ref{AppAnalytic}, it can be proven that $y(t)=v^+(t)$ satisfies the Ordinary Differential Equation
\begin{equation}
    y'=a(t)\,y+b(t),
    \label{ODE-bounded}
\end{equation}
with
\begin{equation}
        a(t)=-\frac{1}{Z}\,\frac{2+Z\,\left(\mathscr{C}'(t)+\mathscr{Q}_C(t)\right)}{\mathscr{C}(t)},\qquad b(t)=-\frac{1}{Z}\,\frac{S(t)}{\mathscr{C}(t)}.
    \label{ABt}
\end{equation}
The function $a(t)$ is $T$-periodic, and $b(t)$ is bounded. If $\overline{a}\leq 0$, then the conclusion of Lemma \ref{LemmaDuhamel} holds and $y(t)$ is bounded.
Let us consider again the modulation $\varphi_X(t)=\sin(t)$, with $X=C,Q_C$. From \eqref{ABt}, one obtains
\begin{equation}
    a(t)=-\frac{1}{Z\,\mathscr{C}_0}\,f(t),
    \label{FuncA}
\end{equation}
with
\begin{equation}
    f(t)=\frac{2+Z\mathscr{Q}_{C_0}+\gamma\,\cos(\Omega t)+Z\mathscr{Q}_{C_0}\varepsilon_{Q_C}\sin(\Omega t)}{1+\varepsilon_C\,\sin(\Omega t)},\qquad \gamma=Z\,\mathscr{C}_0\,\varepsilon_C\,\Omega.
\end{equation}

The mean value of $f$ writes
\begin{equation}
    \begin{array}{lll}
         \ds \overline{f} &=& \ds \frac{1}{T}\int_0^T\frac{2+Z\mathscr{Q}_{C_0}+\gamma\,\cos(\Omega t)+Z\mathscr{Q}_{C_0}\varepsilon_{Q_C}\sin(\Omega t)}{1+\varepsilon_C\,\sin(\Omega t)}\,dt, \\ [10pt]
         &=& \ds \frac{1}{2\,\pi}\left((2+Z\mathscr{Q}_{C_0})\int_0^{2\pi}\frac{ds}{1+\varepsilon_C\,\sin(s)}+\gamma\int_0^{2\pi}\frac{\cos(s)}{1+\varepsilon_C\,\sin(s)}\,ds\,+\,Z\mathscr{Q}_{C_0}\varepsilon_{Q_C}\int_0^{2\pi}\frac{\sin(s)}{1+\varepsilon_C\,\sin(s)}\,ds \right).
    \end{array}
\end{equation}
Classical change of variables, and integration of a periodic function over one period, gives
\begin{equation}
\begin{array}{l}
\ds   \int_0^{2\pi}\frac{ds}{1+\varepsilon_C\,\sin(s)}=\frac{2\,\pi}{\sqrt{1-\varepsilon_C^2}},\qquad  \int_0^{2\pi}\frac{\cos(s)}{1+\varepsilon_C\,\sin(s)}\,ds=0, \\ [8pt] \ds \int_0^{2\pi}\frac{\sin(s)}{1+\varepsilon_C\,\sin(s)}\,ds=\frac{2\pi}{\varepsilon_C}\left(1-\frac{1}{\sqrt{1-\varepsilon_C^2}}\right),
\end{array}
\end{equation}

Straightforward calculations give
\begin{equation}
\overline{f}=\frac{1}{\sqrt{1-\varepsilon_C^2}}\left(2+Z\mathscr{Q}_{C_0}\,\Ng(\varepsilon_C,\varepsilon_{Q_C})\right),
\end{equation}
where $\Ng(.,.)$ has been introduced in \eqref{FuncG}. Equation \eqref{FuncA} and Lemma \ref{LemmaG} yield
\begin{equation}
    \overline{a}=-\frac{1}{Z\,\mathscr{C}_0}\,\frac{2+Z\mathscr{Q}_{C_0}\,\Ng(\varepsilon_C,\varepsilon_{Q_C})}{\sqrt{1-\varepsilon_C^2}}<0.
\end{equation}
Lemma \ref{LemmaDuhamel} implies that the solution of \eqref{ODE-bounded} is bounded, which concludes the proof of Proposition~\ref{PropAmpli}.

%

\bibliographystyle{crunsrt}
%
%
\bibliography{CREM_2024}

\end{document}